\documentclass[11pt]{article}
\usepackage{amsmath}
\usepackage{amssymb}
\usepackage{amsthm}
\usepackage{graphicx} 
\usepackage{fullpage}
\usepackage{hyperref}
\usepackage{xcolor}
\usepackage{multicol}
\usepackage{url}
\usepackage[utf8]{inputenc}
\usepackage{enumitem}
\usepackage[margin=1in]{geometry}
\usepackage{cleveref}
\usepackage{doi}
\usepackage{lineno}

\usepackage{tikz}
\usetikzlibrary{matrix,calc,arrows,positioning,graphs}

\title{Super-linear Lower Bounds for CSP Non-Redundancy \\ via Shrinking Instances}
\author{}
\author{Joshua Brakensiek\thanks{University of California, Berkeley. Contact: \href{mailto:josh.brakensiek@berkeley.edu}{josh.brakensiek@berkeley.edu}} \and Venkatesan Guruswami\thanks{Simons Institute for the Theory of Computing and the University of California, Berkeley. Contact: \href{mailto:venkatg@berkeley.edu}{venkatg@berkeley.edu}} \and Bart M. P. Jansen\thanks{Eindhoven University of Technology, The Netherlands. Contact: \href{mailto:b.m.p.jansen@tue.nl}{b.m.p.jansen@tue.nl}} \and Victor Lagerkvist\thanks{Linköping University, Linköping. Contact: \href{mailto:victor.lagerkvist@liu.se}{victor.lagerkvist@liu.se}} \and Magnus Wahlström\thanks{Royal Holloway, University of London. Contact: \href{mailto:magnus.wahlstrom@rhul.ac.uk}{magnus.wahlstrom@rhul.ac.uk}}}

\date{}

\usepackage{xcolor}
\definecolor{olivegreen}{RGB}{107,142,35} 

\newcommand{\CSP}{\operatorname{CSP}}

\newcommand{\Cat}{\operatorname{Cat}}

\newcommand{\N}{\mathbb{N}}
\newcommand{\F}{\mathbb{F}}

\newcommand{\OR}{\operatorname{OR}}

\newcommand{\eps}{\varepsilon}
\newcommand{\pr}{\operatorname{pr}}

\newcommand{\NRD}{\operatorname{NRD}}

\newcommand{\EQ}{\operatorname{EQ}}

\newcommand{\BoolBCK}{\operatorname{BoolBCK}}
\newcommand{\BoolBCKp}{\operatorname{BoolBCK}^{+}}

\newtheorem{theorem}{Theorem}
\numberwithin{theorem}{section}
\newtheorem{lemma}[theorem]{Lemma}

\newtheorem{proposition}[theorem]{Proposition}
\newtheorem{corollary}[theorem]{Corollary}

\theoremstyle{definition}
\newtheorem{definition}[theorem]{Definition}
\newtheorem{remark}[theorem]{Remark}

\makeatletter
\renewcommand{\paragraph}{%
  \@startsection{paragraph}{4}%
  {\z@}{6pt \@plus 1pt \@minus 1pt}{-5pt}%
  {\normalfont\normalsize\bfseries}%
}
\makeatother

\usepackage{xspace}

\begin{document}

\allowdisplaybreaks

\maketitle

\begin{abstract}
The non-redundancy (NRD) of a constraint satisfaction problem (CSP) is a combinatorial quantity closely tied to the behavior of CSPs in various computational models including their sparsification, kernelization, and streaming complexity. A primary open question in the study of non-redundancy is the identification of which CSP predicates have near-linear NRD. Recent works by Carbonnel [CP 2022], Khanna, Putterman and Sudan [STOC 2025], Brakensiek and Guruswami [STOC 2025] and Brakensiek, Guruswami, Jansen, Lagerkvist, and Wahlström [2025] have introduced various forms of \emph{gadget reductions} between CSPs to relate their non-redundancy.

The primary contribution of this work is to recontextualize many of these gadget reductions in a framework which we call \emph{hypergraph projections}. By studying a quantity we call the \emph{shrinking factor} of these hypergraph projections, we can more precisely predict when a gadget reduction between predicates can yield a super-linear NRD lower bound, greatly improving on the analysis of previous works. To illustrate the power of our framework, we identify some concrete CSP predicates whose non-redundancy is at the cusp of our understanding and show how our methods give lower bounds that could not have been achieved with these previous methods. We also demonstrate how these gadget reductions can be automatically deduced using SAT solvers, thereby opening up novel computational avenues for discovering further relationships between the non-redundancy of various CSPs.
\end{abstract}

\section{Introduction}

In the study of constraint satisfaction problems (CSPs), the notion of the \emph{non-redundancy} (NRD) of a CSP predicate has recently emerged as a crucial combinatorial benchmark in many questions concerning the structure of CSPs~\cite{KK15,FK17,BZ20,BCK20,CJP20,LW20,C22,KPS24,KPS25,BG25,BGJLW25,BGP26,LGE26,SV26,SV26b}. The concept of NRD was formally introduced by Bessiere, Carbonnel, and Katsirelos~\cite{BCK20} as a tool for understanding a query complexity problem. Given a predicate\footnote{The term predicate is used interchangeably with the term \emph{relation}.} $P \subseteq D^r$, where the finite set $D$ is the \emph{domain} of the predicate and $r$ is the \emph{arity} of the predicate, the non-redundancy of $P$, denoted by $\NRD(P, n)$, measures the largest instance $\mathcal I$ of $\CSP(P)$ on $n$ variables such that no constraint $C_i \in \mathcal I$ is logically implied by the remaining clauses, i.e., there exist assignments to $\mathcal I$ which satisfy every clause except $C_i$.

As a canonical example, consider the \emph{equality} constraint $\EQ \subseteq D^2$ where $(a,b) \in \EQ$ if and only if $a = b$. Since $\EQ$ has arity $2$, any instance of $\CSP(\EQ)$ can be viewed as a graph $G = (V, E)$, where $V$ are the variables and $E \subseteq \binom{V}{2}$ are the constraints. Observe that if $G$ contains a cycle, then $G$ is redundant--if we satisfy $\EQ$ for all but one edge of the cycle, then transitivity of equality implies that the last edge must also be satisfied. Therefore, maximal non-redundant instances of $\CSP(\EQ)$ are spanning trees; hence, $\NRD(\EQ, n) = n-1$.

In cases like $\EQ$, if one can establish a predicate $P$ has \emph{near-linear} NRD---that is $\NRD(P, n) = n^{1+o(1)}$---then one can build many other (often randomized) structures and algorithms concerning $\CSP(P)$. For example, by a result of Brakensiek and Guruswami~\cite{BG25} one can use non-uniform random sampling to construct a near-linear \emph{sparsifier} of any instance of $\CSP(P)$, analogous to the near-linear cut sparsifiers due to Bencz{\'u}r and Karger~\cite{K93,BK96}. In the setting of kernelization, where one seeks to make an instance of $\CSP(P)$ as small as possible while preserving satisfiability, if the non-redundancy upper bound proof reveals how to constructively detect redundancies, then near-linear kernels can be directly constructed~\cite{C22} (see also discussion in \cite{H26}). More recently, a result due to Sharma and Velusamy~\cite{SV26} shows that near-linear NRD directly implies near-linear single-pass streaming algorithms to decide the satisfiability of $\CSP(P)$. Conversely, if the non-redundancy is $P$ super-linear, that is $\NRD(P, n) = \Omega(n^{1+\eps})$ for some constant $\eps > 0$, then one can rule out both deterministic and randomized constructions of near-linear sparsifiers~\cite{BG25} and streaming algorithms~\cite{SV26}. These observations thus motivate a central open question in the study of CSP non-redundancy.

\begin{center}
\emph{Which CSP predicates have linear or near-linear NRD?}
\end{center}

To highlight the complexity and richness of this question, we first discuss additional examples of predicates with near-linear NRD and then discuss some additional predicates which provably have super-linear NRD.

\paragraph{Predicates with linear NRD.} 
The canonical example of a predicate with linear NRD is that of \emph{linear equations}.
For example, over the Boolean domain $D = \{0,1\}$, which we can identify with the field $\F_2$, any predicate $P \subseteq \F_2^r$ which is the set of solutions to a system of linear equations over $\F_2$ (e.g., $P$ is an XOR predicate) yields $\NRD(P, n) \le n$. This follows from the fact that in this setting non-redundancy is equivalent to \emph{linear independence} of the constraints, and given there are only $n$ variables, the maximum number of linearly independent constraints is $n$.

A rather subtle consequence of this fact is that, to give a linear upper-bound on NRD, one only needs to show that a predicate $P$ \emph{embeds} into a system of linear equations~\cite{LW20,CJP20,BCK20,KPS25}. To illustrate by example, consider the 1-in-3 SAT predicate $P = \{(0,0,1),(0,1,0),(1,0,0)\}$. To bound the non-redundancy of $P$ consider the linear equations $Q = \{(x,y,z) \in \F_3^3 : x+y+z = 1\}$. By the aforementioned logic, we know that $\NRD(Q, n) \le O(n)$.
Further observe that $P$ is precisely the set of Boolean tuples in $Q$ (i.e., $P = Q \cap \{0,1\}^3$). 
So if we take a non-redundant instance of $\CSP(P)$ and swap out $P$-constraints with $Q$-constraints, it is easily seem that the resulting $\CSP(Q)$ is also non-redundant, witnessed by the same assignments that demonstrate the non-redundancy of each $P$-constraint. It follows that $\NRD(P, n) \leq \NRD(Q, n) \leq n$. 
This 1-in-3 SAT example shows that NP-hard predicates can have linear non-redundancy, showing that study of NRD greatly diverges from the complexity classification of CSPs (e.g., \cite{Bulatov17,zhuk2020Proof}).

More generally, these constructions need not be over a finite field but rather can be done over any group\footnote{However, for non-Abelian examples, one must consider \emph{cosets} rather than \emph{equations}, see~\cite{KPS25,BG25}.} or even more general objects known as \emph{Mal'tsev algebras}. That said, in the Boolean setting, \cite{BGJLW25} showed that these complex conditions simplify to the much more elementary concept of being \emph{balanced}:

\begin{definition}[\cite{CJP20}]
\label{def:balance}
A Boolean predicate $P \subseteq \{0,1\}^r$ is \emph{balanced} if for all~$k \in \mathbb{N}$ and all tuples $t_1, \hdots, t_{2k+1} \in P$ (some possibly equal), if we compute (over $\mathbb Z$) the $r$-tuple
\[
    t := t_1 - t_2 + t_3 - \cdots + t_{2k+1}
\]
and $t \in \{0,1\}^r$, then $t \in P$.
\end{definition}

This condition is equivalent to being definable by a system of equations over a ring $\mathbb Z/m\mathbb Z$~\cite{CJP20,KPS25}. For example, one can easily verify that $P = \{(0,0,1),(0,1,0),(1,0,0)\}$ is balanced. However, the 2-SAT predicate $\OR_2 = \{(0,1),(1,0),(1,1)\}$ is not balanced as
\[
    (0,1) - (1,1) + (1,0) = (0,0) \not\in \OR_2.
\]

\paragraph{Predicates with super-linear NRD.} As a first example of a predicate with super-linear NRD, we know that $\NRD(\OR_2, n) = \Theta(n^2)$. The reason is that the graph $G = (V, E = \binom{V}{2})$ is a non-redundant instance of $\OR_2$ (or equivalently\footnote{The NRD of a predicate does not change when negations are applied to the variables~\cite{C22}.} 2-SAT). Pick any edge $(u,v) \in E$ and consider an assigment $\psi : V \to \{0,1\}$ such that $\psi(u) = \psi(v) = 0$ but $\psi(w) = 1$ for all other $w \in V \setminus \{u, v\}$. For this assignment, every $\OR_2$ clause of $G$ is satisfied except for $(u,v)$, so $G$ is non-redundant. By similar logic, for the predicate $\OR_k = \{0,1\}^k \setminus \{0^k\}$ (corresponding to $k$-SAT), we have $\NRD(\OR_k, n) = \Omega_k(n^k)$.

With even just this simple family of lower bounds, we can deduce that many other predicates have super-linear NRD via \emph{gadget reductions}~\cite{C22,KPS25,BGJLW25}. To illustrate by example, a recent classification by Brakensiek, Guruswami, and Putterman~\cite{BGP26} of Boolean predicates of arity 4, considered as one of their cases the predicate $P = \{(0,0,0,0),(0,0,0,1),(0,1,1,0),(1,1,1,1)\}$. Define a predicate $Q \in \{0,1\}^2$ to be a \emph{projection} of $P$ as follows
\begin{align}
    (x_1,x_2) \in Q \iff (0, \bar{x}_1, \bar{x}_1, \bar{x}_2) \in P.\label{eq:A}
\end{align}
With this definition, one can inspect that $Q = \OR_2$, from which one can deduce that $\NRD(P, n) \ge \Omega(\NRD(Q, n)) = \Omega(n^2)$. Similar techniques were also used in a related classification by Sharma and Velusamy~\cite{SV26b}.

A more advanced gadget reduction technique that has recently emerged~\cite{BG25,BGJLW25} allows for multiple variables per argument in the projection. As a key example, consider from \cite{BG25} the predicate $R = \{(x,y,z) \in \F^3_3 \setminus \{(0,0,0)\} : x+y+z = 0\}$. That is, $R$ is a \emph{punctured} set of solutions to a linear equation. A crucial insight by \cite{BG25} is that there exists three functions $g_1, g_2, g_3 : \{0,1\}^2 \to \F_3$ such that.
\begin{align}
    \forall x \in \{0,1\}^3, (x_1, x_2, x_3) \in \OR_3 \iff (g_1(x_1,x_2), g_2(x_1,x_3), g_3(x_2,x_3)) \in R.\label{eq:B}
\end{align}
From this, \cite{BG25} deduce that $\NRD(R, n) = \Omega(n^{1.5})$. Importantly, they also show that $\NRD(R, n) = O(n^{1.6}\log n)$, so no direct reduction from $\OR_2$ could have worked. See Remark~\ref{remark:3LIN*} for a detailed analysis of this example.

More generally, \cite{BGJLW25} showed that if one can perform gadget reductions as~(\ref{eq:B}) where $k$ variables are grouped a time, then one can prove lower bounds of the form $\NRD(P, n) \ge \Omega(\NRD(Q, n)^{1/k})$ from which they could construct for every rational $\beta \in \mathbb Q \cap [1, \infty)$ a predicate $P \subseteq D^r$ (for some sufficiently large values of $|D|$ and $r$) such that $\NRD(P, n) = \Theta(n^{\beta}).$

Another of class of methods for lower-bounding NRD introduced by \cite{BGJLW25} borrows from \emph{extremal combinatorics}.\footnote{We note that \emph{upper bounds} using extremal combinatorics have appeared in multiple works prior to \cite{BGJLW25}, see \cite{BCK20,C22}.} Such methods could be viewed as analogues of the extremal combinatorics methods used recently to derandomize special cases of the \emph{range avoidance} problem in complexity theory~\cite{kuntewar2025Range}.
We discuss the relationship between NRD and extremal combinatorics in more detail in our technical overview.

\subsection{Our Contributions and Technical Overview}

The main technical contribution of this paper is to \emph{revisit} the gadget reduction framework between predicates and improve its analysis. Using this, we show new super-linear lower bounds which were not previously known. For a given predicate $P \subseteq D^r$, we can think of a non-redundant instance of $\CSP(P)$ as an $r$-uniform hypergraph\footnote{Technically $H$ is a generalization of a hypergraph,
similar to \emph{directed} hypergraphs, 
in that each tuple has an ordering as implied by $V^r$, but we omit this distinction for simplicity of notation.} $H = (V, E \subseteq V^r)$. By using standard techniques (e.g., \cite{C22}), we can further assume this instance is \emph{$r$-partite} while only changing the non-redundancy by a constant factor. That is, there is a partition $V = V_1 \cup \cdots \cup V_r$ of $V$ such that $E \subseteq V_1 \times V_2 \times \cdots \times V_r$.

\subsubsection{Reductions for NRD via hypergraph projections}
Our goal is to transform $H$ in a way such that it becomes a non-redundant instance of some target predicate $Q \subseteq (D')^{r'}$. Toward this, we consider \emph{projections} of this hypergraph. More precisely, for a set $I \subseteq [r] := \{1, \hdots, r\}$ of indices, we can define an $|I|$-uniform hypergraph $\pi_I H$ whose vertices are $\pi_I V := \bigcup_{i \in I} V_i$ and whose edges are $\pi_I E := \{\pi_I e := (e_i : i \in I) : e \in E\}$. The key idea is that if we ``bundle'' $r'$ of these projections, we can create an instance of $\CSP(Q)$. More precisely pick $r'$ subsets $I_1, \hdots, I_{r'} \subseteq [r]$ (possibly with repetition) and let $\mathcal I = \{I_1, \hdots, I_{r'}\}$, we define the \emph{$\mathcal I$-projection} of $H$ to be an $r'$-partite $r'$-uniform hypergraph $H' = (V', E')$ whose vertex set is the disjoint union of $\{\pi_{I_i} E : i \in [r']\}$ and whose edges correspond to $E$, except we replace $e \in E$ with $(\pi_{I_1} e, \hdots, \pi_{I_{r'}} e) \in E'.$

When is this  $H'$ a non-redundant instance of $\CSP(Q)$? It turns out it suffices to check if there is a gadget map $g : D^r \to (D')^{r'}$ with the following properties:
\begin{enumerate}
\item For every $p \in P$, $g(p) \in Q$.
\item For every $p \in D^r \setminus P$, $g(p) \not\in Q$.
\item For every $i \in [r']$, the $i$th coordinate map $g_i : D^r \to D'$ only depends on the input coordinates indexed by $I_i$.
\end{enumerate}
If such a map exists, we say that ``$P$ is an $\mathcal I$-substructure of $Q$'' as it shows that a copy of $P$ in a sense ``lives inside'' $Q$.
Returning to our previous examples, equation (\ref{eq:A}) depicts a $(\{\}, \{1\}, \{1\}, \{2\})$-substructure and equation (\ref{eq:B}) depicts a $(\{1,2\},\{1,3\},\{2,3\})$-substructure.  We remark that for the concrete predicates we consider this paper, whether such a map exists can be found within seconds using a SAT solver---see Appendix~\ref{app:sat} for more details. 

Assuming that $P$ is an $\mathcal I$-substructure of $Q$, then we can immediately deduce a non-redundancy lower bound on $\CSP(Q)$ from non-redundant instance $H = (V, E)$ of $\CSP(P)$.
More precisely, we have that
\[
  \NRD(Q, \sum_{i=1}^{r'}|\pi_{I_i} E|) \ge |E|,
\]
In particular, if we can construct non-redundant instances $H$ of $\CSP(P)$ where $|\pi_{I_i} E| \le |E|^{1-\eps}$ for all $i \in [r']$, then we can deduce that $\NRD(Q, n)$ has super-linear non-redundancy! We define such instances $H$ to be $(|E|^{\eps}, \mathcal I)$-\emph{shrinking instances}. This leads to our main technical result.

\begin{theorem}[Informal version of Theorem~\ref{thm:shrink-lb}]\label{thm:main}
Assume that $P$ is an $\mathcal I$-substructure of $Q$ and that $\CSP(P)$ has an infinite family of $(|E|^{\eps}, \mathcal I)$-\emph{shrinking instances}. Then, $\NRD(Q, n) = \Omega(n^{\frac{1}{1-\eps}})$.
\end{theorem}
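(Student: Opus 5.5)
The plan is to push a shrinking instance of $\CSP(P)$ through the $\mathcal I$-projection and check that the image is a non-redundant instance of $\CSP(Q)$ on few variables. Fix an $(|E|^{\eps},\mathcal I)$-shrinking instance $H=(V,E)$ of $\CSP(P)$ from the assumed infinite family, with $E\subseteq V_1\times\cdots\times V_r$ and $\mathcal I=\{I_1,\dots,I_{r'}\}$. Let $g=(g_1,\dots,g_{r'})$ be the gadget witnessing that $P$ is an $\mathcal I$-substructure of $Q$. Form the $\mathcal I$-projection $H'=(V',E')$. Its variables are the disjoint union of the sets $\pi_{I_i}E$, and each $e\in E$ becomes the constraint $C'_e=(\pi_{I_1}e,\dots,\pi_{I_{r'}}e)$ of $\CSP(Q)$.

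First, $e\mapsto C'_e$ is injective on $E$ whenever the sets $I_i$ jointly cover every coordinate of $[r]$ on which membership in $P$ depends. In general $|E'|$ could collapse, so I would argue directly. Fix $e\in E$. By non-redundancy of $H$ there is $\psi:V\to D$ satisfying every constraint except $e$. Define $\psi':V'\to D'$ on the variable $u\in\pi_{I_i}E$ by $\psi'(u)=g_i(\psi(u))$, where $\psi(u)$ is the partial assignment to the coordinates $I_i$. This is well defined precisely because of property 3: $g_i$ only reads the coordinates in $I_i$, and $u$ records exactly those vertices. Then for every $f\in E$ the constraint $C'_f$ receives $g(\psi(f))$. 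By properties 1 and 2, this tuple lies in $Q$ if and only if $\psi(f)\in P$, that is, if and only if $f\neq e$. If two edges $e\neq f$ gave the same constraint, $\psi$ would give it one value that both satisfies and violates $Q$, a contradiction. So the map is injective, and every constraint of $H'$ is non-redundant. Hence $H'$ is a non-redundant instance with $|E|$ constraints on $n'=\sum_i|\pi_{I_i}E|\le r'|E|^{1-\eps}$ variables.

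Next I turn this into the asymptotic bound. This gives $\NRD(Q,n')\ge|E|\ge (n'/r')^{1/(1-\eps)}$. To get the bound for every $n$, not just the $n'$ arising from the family, I would use the fact that the family is infinite and has $|E|$ unbounded. The assumption is a scale of instances, so I would like consecutive sizes to grow by at most a constant factor. If the family is too sparse, I would handle a general $n$ by taking a disjoint union of as many copies of the largest instance with $n'\le n$ as fit, and padding with isolated variables. NRD is monotone in $n$ and superadditive under disjoint unions, so this gives $\NRD(Q,n)\ge \lfloor n/n'\rfloor\,\NRD(Q,n')$. This still loses a factor when $n'\ll n$, so I expect the formal Theorem~\ref{thm:shrink-lb} to assume instances exist at every scale, or to state the bound along the family, i.e.\ for infinitely many $n$. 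I would adopt whichever form the formal statement takes.

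The main obstacle is the well-definedness and correctness of the pulled-back assignment $\psi'$, which is exactly where property 3 and the $r'$-partite disjoint-union construction are used. If the vertex sets $\pi_{I_i}E$ were not kept disjoint across different $i$, and in particular when $I_i=I_j$ with $g_i\neq g_j$, then $\psi'$ could be forced to take two values on one variable. Keeping them disjoint resolves this. The remaining work is routine bookkeeping on constants depending on $r'$.
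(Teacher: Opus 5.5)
Your proposal is correct and follows the paper's route. As in Lemma~\ref{lemma:projection-NRD}, you pull each $\psi_e$ back through the witnessing maps on the disjoint union of the parts $\pi_{I_i}E$, which gives a non-redundant instance with $|E|$ constraints on at most $r'|E|^{1-\eps}$ variables; the growth condition you anticipate is exactly hypothesis~2 of Theorem~\ref{thm:shrink-lb} ($1 \le |E_{i+1}|/|E_i| \le C_2$), which together with monotonicity of NRD in $n$ gives the bound for all large $n$, and your explicit check that $e \mapsto \pr_{\mathcal I} e$ is injective makes precise a point the paper leaves implicit.
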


\paragraph{Comparison to \cite{BGJLW25}.} Before we dive into various applications of Theorem~\ref{thm:main}, we must compare our lower bound with the \emph{$c$-fgpp} framework of \cite{BGJLW25}. Briefly speaking, this projection framework can be viewed as a tighter combinatorial interpretation of the ``$c$-fgpp'' reduction framework of \cite{BGJLW25}. More precisely, in the language used in this paper, the $c$-fgpp reduction framework assumed that each set $I_i$ had the same size $c$. Furthermore, the instances constructed by \cite{BGJLW25} naively have $|V|^c$ vertices rather than the much sharper bound of $\sum_{i=1}^{r'}|\pi_{I_i} E|$. As a result, \cite{BGJLW25} could only prove non-redundancy lower bounds of the form $\NRD(Q, n) \ge \NRD(P, n)^{1/c}$ (see our Corollary~\ref{cor:richness})---meaning their method is useless for proving superlinear lower bounds when~$\NRD(P, n) = O(n^c)$. In contrast, we show that in many such situations, super-linear lower bounds are still possible due to the existence of shrinking instances. In other situations where \cite{BGJLW25} can also prove super-linear lower bounds, our construction of shrinking instances can produce even stronger lower bounds (see Remark~\ref{remark:better}).

\subsubsection{Applications}

The main application of our reduction framework above is to derive super-linear lower bounds on the NRD of some predicates, which eluded previous approaches. As previously mentioned, the most general criterion we know to determine whether a Boolean predicate $P \subseteq \{0,1\}^r$ has (near-) linear non-redundancy is to check whether $P$ is balanced (Definition~\ref{def:balance}).
Thus being imbalanced, or more generally lacking a Mal'tsev embedding in the case of predicates over larger domains, has been viewed as suggestive of a super-linear NRD. Yet there are several concrete imbalanced predicates for which we do not know if the NRD is linear or not.
A particularly important example is the predicate $\BoolBCK \subseteq \{0,1\}^9$ defined as follows\footnote{As is common in the literature, we omit commas when depicting long ordered tuples.}
\begin{align*}
  \BoolBCK := \{010100001, 001010100, 100001010, 001100010, 010001100\}.
\end{align*}

One can think of the tuples in $\BoolBCK$ as the sequences obtained from all $3 \times 3$ permutation matrices \emph{except} the identity matrix, by listing their entries in row-major order. The reason $\BoolBCK$ is imbalanced is an alternating sum of elements of $\BoolBCK$ can produce the identity matrix.
\[
010100001 - 001100010 + 001010100 - 010001100 + 100001010 = 100010001 \not\in \BoolBCK.
\]
The strong interest in $\BoolBCK$ stems from the following fact. If a Boolean predicate~$P$ is imbalanced because an alternating sum of \emph{three} tuples of~$P$ gives a 0/1-tuple outside of~$P$, then one can easily show~\cite[Proposition 2.12]{CJP20} that $\OR_2$ is a substructure of $P$ so that~$\NRD(P, n) \geq \NRD(\OR_2, n) \geq \Omega(n^2)$.
To understand whether imbalanced Boolean predicates have superlinear NRD, the logical next step is to investigate what happens when~$P$ is preserved by all alternating sums of three tuples, but not by a sum of five tuples. This is what $\BoolBCK$ captures, as explained in earlier work~\cite[p.~2240]{CJP20}.\footnote{The predicate $\BoolBCK$ can be obtained by negating the first three coordinates of the predicate~$R^*$ described in~\cite{CJP20}, which does not affect NRD, and then re-ordering columns.}

Due to its imbalance, proving a non-trivial lower bound on $\BoolBCK$ (or close variants) has been posed\footnote{Brakensiek and Guruswami~\cite{BG25} coined the name ``$\BoolBCK$'' due it being the Booleanization of a predicate identified by \cite{BCK20}.} as an open problem by numerous authors~\cite{BCK20,CJP20,LW20,KPS25,BG25,BGJLW25}.

These sources also observe that adding the identity matrix to $\BoolBCK$ makes it balanced. In other words, if we define $\BoolBCKp := \BoolBCK \cup \{100010001\}$ we know that $\NRD(\BoolBCKp, n) = O(n)$. By using the \emph{conditional non-redundancy} framework of Brakensiek and Guruswami~\cite{BG25}, one can then show that if $\BoolBCK$ has super-linear non-redundancy, then there is a super-linear-sized instance such that the non-redundancy of each constraint can be witnessed by an assignment which maps that constraint to $100010001$.
That is, it suffices to study the \emph{non-redundancy} of the \emph{conditional predicate} $\BoolBCK \mid \BoolBCKp$. See Definition~\ref{def:conditional-nrd} for an exact definition.

Although we do not know whether a super-linear construction exists, we can still ask about ``shadows'' of its existence. More precisely, consider a $9$-uniform $9$-partite hypergraph $H$ which is a non-redundant instance of $\BoolBCK \mid \BoolBCKp$. Every projection $\pi_I H$ can be viewed as a non-redundant instance of a projected predicate (i.e., $\pi_I \BoolBCK \mid \pi_I \BoolBCKp$). We show that Theorem~\ref{thm:main} is strong enough to deduce that \emph{all} of these projections can have super-linear size.

\begin{theorem}[Informal version of Theorem~\ref{thm:BoolBCK-projections}]\label{thm:boolbck-intro}
Every non-trivial projection of $\BoolBCK | \BoolBCKp$ has a super-linear NRD lower bound.
\end{theorem}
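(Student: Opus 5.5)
We plan to apply Theorem~\ref{thm:main} once for each nonempty proper subset $I \subsetneq [9]$. Up to the symmetries of $\BoolBCK \mid \BoolBCKp$ (row permutations, column permutations and transposition of the $3\times 3$ matrix, all of which preserve the predicate pair), this leaves only a small number of orbit representatives. A projection is \emph{non-trivial} if $\pi_I \BoolBCK \subsetneq \pi_I \BoolBCKp$ setwise in the relevant sense, that is, if $\pi_I(100010001) \notin \pi_I \BoolBCK$. Otherwise the conditional predicate is empty and there is nothing to show.

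\textbf{Step 1: reduce to small $I$.} NRD lower bounds pass upward under projection. A non-redundant instance of $\pi_J \BoolBCK \mid \pi_J \BoolBCKp$ for $J \subseteq I$ yields one for $\pi_I$ by padding with fresh variables, because the witness conditions only become easier. It therefore suffices to handle the \emph{minimal} non-trivial sets $I$, i.e., those for which $\pi_I(100010001)$ is already excluded from $\pi_I\BoolBCK$. Checking the five permutation matrices, a set $I$ is non-trivial exactly when no non-identity permutation agrees with the identity on the cells of $I$. For example, $I$ containing two diagonal cells already works, since the only permutation fixing two points is the identity. So does a set containing one diagonal cell together with off-diagonal cells that kill the two transpositions fixing that point. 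We enumerate the minimal such $I$ up to symmetry.

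\textbf{Step 2: source predicate and shrinking instances.} For each minimal $I$, identify a source conditional predicate $P \mid P^+$ with known super-linear behaviour. Natural candidates are $\OR_2$, $\OR_3$, or the $\pi$ of a permutation-type predicate. Then find a family $\mathcal I$ of index subsets together with a gadget exhibiting $P$ as an $\mathcal I$-substructure of $\pi_I\BoolBCK \mid \pi_I\BoolBCKp$. In the conditional setting, the gadget must send the designated witness tuples to $\pi_I(100010001)$. These gadgets will be found by the SAT-solver search of Appendix~\ref{app:sat}.

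Next, construct non-redundant instances $H$ of $P$ that are $(|E|^{\eps},\mathcal I)$-shrinking. The model is the $\OR_3$ example: take $H$ to be the triples of a partial Steiner or linear hypergraph, or a Ruzsa--Szemerédi-type construction. In such an $H$, every pair $\{i,j\}$-projection is injective on edges but the number of edges is large relative to the pair counts. Concretely, we want $|E| = N^{1+\delta}$ while each $|\pi_{I_i}E| \le N$. Theorem~\ref{thm:main} then gives $\NRD(\pi_I\BoolBCK\mid\pi_I\BoolBCKp,n) = \Omega(n^{1+\delta'})$.

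\textbf{Main obstacle.} The hard part is Step 2 for the largest minimal sets $I$, those close to $[9]$. For these, the projected predicate is nearly $\BoolBCK$ itself and low-arity sources like $\OR_2$ do not embed. We expect to need sources whose own NRD is only polynomial of degree equal to the group size $c$. This is exactly the regime where the $c$-fgpp bound of Corollary~\ref{cor:richness} is useless, so the shrinking-factor gain must come from an extremal construction. Our first attempt will be a linear hypergraph from $3$-term-progression-free sets (Behrend/Ruzsa--Szemerédi), which gives $|E| = N^{2-o(1)}$ with pair-projections of size $N$. If the required projection pattern $\mathcal I$ is not pairwise, we will instead try algebraic constructions over $\F_q$, such as point–line incidences, to obtain a polynomial shrinking factor.
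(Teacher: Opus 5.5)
\textbf{There is a genuine gap.} Your Step~1 has the monotonicity backwards, and this sends the plan to the wrong cases. Suppose $I\subseteq J$ and $\pi_I\BoolBCK\neq\pi_I\BoolBCKp$. Projecting a non-redundant instance $H$ of $\CSP(\pi_J\BoolBCK\mid\pi_J\BoolBCKp)$ gives a non-redundant instance $\pi_I H$ of the $I$-predicate, with the same number of edges and no more vertices. Indeed, two edges $e\neq e'$ with $\pi_I e=\pi_I e'$ would force $\pi_I(100010001)=\pi_I\psi_e(e)=\pi_I\psi_e(e')\in\pi_I\BoolBCK$. So lower bounds pass \emph{downward}, and the cases that must be proved are the \emph{maximal} ones, $|J|=8$. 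This is exactly the first step of the paper's proof. It is followed by the symmetry reduction of Lemma~\ref{lem:J-sym} to $J_1=[9]\setminus\{1\}$ and $J_2=[9]\setminus\{2\}$. The relevant symmetries are simultaneous row/column permutations and transposition, which fix $100010001$; independent row or column permutations do not preserve the pair. Your padding argument does not give upward transfer. The values forced on the new coordinates depend on which permutation matrix witnesses each edge, and this varies from edge to edge and witness to witness. The only generic fix is a private fresh variable per edge and coordinate, which makes the vertex count $\Theta(|E|)$ and the bound linear. As a sanity check, $\pi_{\{1,5\}}\BoolBCK\mid\pi_{\{1,5\}}\BoolBCKp$ is just NAND, with NRD $\Theta(n^2)$. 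Upward transfer would then give a quadratic lower bound for $\BoolBCK$ itself, which is the open problem the paper explicitly leaves unresolved. As written, your plan only covers the small, easy projections.

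For the $|J|=8$ cases, which your ``main obstacle'' paragraph rightly flags as the real difficulty, you supply no source predicate, gadget, or shrinking family. Moreover, the extremal construction you suggest cannot shrink. In a linear $3$-uniform hypergraph any two edges share at most one vertex, so every pair projection is injective on edges and $|\pi_{\{i,j\}}E|=|E|$. Your target ``$|E|=N^{1+\delta}$ with each $|\pi_{I_i}E|\le N$'' is therefore self-contradictory. Shrinking in the sense of Definition~\ref{def:shrink} is about the sizes of projected edge sets, not vertex counts. It needs many edges to share each projected tuple, i.e.\ product-like structure. The paper's missing ingredients are the following.
\begin{enumerate}
\item The source $R_2\mid S_2=(C_6^*\mid C_6)\boxtimes(C_6^*\mid C_6)$.
\item Explicit maps $\Sigma$, found with Z3, showing that $R_2\mid S_2$ is an $\mathcal I$-substructure of both $\pi_{J_1}\BoolBCK\mid\pi_{J_1}\BoolBCKp$ and $\pi_{J_2}\BoolBCK\mid\pi_{J_2}\BoolBCKp$ with every $|I_j|=3$ (Lemmas~\ref{lem:J1} and~\ref{lem:J2}).
\item The instances $E=E_{12}\times E_{34}$, built from two bipartite girth-$6$ graphs with parts of size $n$ and $\Theta(n^{1.5})$ edges. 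They are non-redundant by Proposition~\ref{prop:girth-6}, have $|E|=\Theta(n^3)$, and every $3$-element projection has size $\Theta(n^{2.5})$.
\end{enumerate}
Theorem~\ref{thm:shrink-lb} with $\eps=1/6$ then gives $\Omega(n^{6/5})$. Your intuition about the regime is correct: with only the $\Omega(n^3)$ bound for the source and $c=3$, Corollary~\ref{cor:richness} yields nothing. The entire gain comes from the product structure of these instances, not from a sparse linear hypergraph.
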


To get these lower bounds, we construct a \emph{non-Boolean} arity $4$ conditional predicate $P \mid Q$ corresponding (essentially) to the product of two $6$-cycles. By suitably adapting an extremal combinatorics lower bound of \cite{BGJLW25}, we can show that $\NRD(P \mid Q, n) = \Omega(n^3)$. We then show that all non-trivial projections of $(\BoolBCK | \BoolBCKp)$ has this conditional predicate $P \mid Q$ as an  $\mathcal I$-substructure, where suitable collection $\mathcal I$ of sets of size $3$ is chosen for each projection. We remark that these substructures were identified via use of a SAT solver (see Appendix~\ref{app:sat} for further details). Due to the parameters of our gadget reduction, the results of \cite{BGJLW25} are insufficient for deducing a non-trivial lower bound. However, we can show that there are infinitely many non-redundant instances of $\CSP(P \mid Q)$ which are $(|E|^{1/6}, I)$-shrinking for every $I \subsetneq [4]$. From this, we can get a super-linear lower bound.

\paragraph{Lower Bounds for Non-Boolean Predicates.} Similar to Theorem~\ref{thm:boolbck-intro}, we also proved lower bounds on the non-redundancy of various non-Boolean predicates. The particular ones we study in this paper are inspired by the recently-discovered ``Catalan identities'' by \cite{BGJLW25} which can be viewed as a non-Boolean analogue of Definition~\ref{def:balance}. See Appendix~\ref{app:non-Boolean} for these results and further discussion.

\paragraph{The Road Ahead.} In this paper, we revisited the gadget reduction framework of previous works studying NRD and related problems. Our use of hypergraph projections and $\mathcal I$-substructures shows that the classification of CSP non-redundancy of different predicates is even more interconnected than previously thought. Furthermore, the observation that the existence of these gadget reductions can be automated using a SAT solver gives further hope that suitable large-scale computations could be applied to locate further relationships between the non-redundancy of various CSPs. In fact, our identification of interesting non-Boolean substructures in projections of $(\BoolBCK | \BoolBCKp)$ shows that a very rich family of predicates are at our disposal for making such deductions. Carrying out such ``computational cartography'' of the landscape of NRD is the subject of future work.

\subsection{Outline}

In Section~\ref{sec:prelims} we formally define all the basic concepts discussed in the introduction, including (conditional) non-redundancy and hypergraph-projections. In Section~\ref{sec:shrink-lb}, we formally define a shrinking instance and show how they can be used to construct super-linear non-redundancy lower bounds, establishing Theorem~\ref{thm:shrink-lb}. In Section~\ref{sec:applications}, we apply Theorem~\ref{thm:shrink-lb} to prove Theorem~\ref{thm:main}. In Section~\ref{sec:conclusion}, we give some concluding thoughts and open questions. In Appendix~\ref{app:sat}, we discuss how SAT solvers can be used to construct gadget reductions. In Appendix~\ref{app:conditional}, we discuss how every ``conditional'' non-redundancy problem is asymptotically equivalent (up to a fixed polynomial factor) with a ``traditional'' non-redundancy problem. In Appendix~\ref{app:non-Boolean}, we discuss non-Boolean extensions of Theorem~\ref{thm:boolbck-intro}.

\section{Preliminaries}\label{sec:prelims}

We define a CSP predicate to a be a set $P \subseteq D^r$. We call $D$ the \emph{domain} of $P$ and $r$ the \emph{arity} of $P$. We say $P$ is \emph{nontrivial} if $P \neq \emptyset$ and $P \neq D^r$.  We define an $r$-uniform \emph{hypergraph} (or \emph{instance}) to be a pair $H = (V, E)$, where $V$ is a set of \emph{vertices} (or \emph{variables}) and $E \subseteq V^r$ is a set of \emph{(hyper)edges} (or \emph{clauses}). Note that in this definition, hyperedges are ordered tuples.
Given an $r$-uniform hypergraph $H = (V, E)$ and a domain $D$, we define an \emph{assignment} to be an arbitrary map $\psi : V \to D$. Given a predicate $P \subseteq D^r$, we say a hyperedge $e \in E$ is \emph{$P$-satisfied} (or just \emph{satisfied}) by an assignment $\psi : V \to D$ if $\psi(e) := (\psi(e_1), \hdots, \psi(e_r)) \in P$. We say that $H$ is ($P$-)satisfied by $\psi$ if every $e \in E$ is satisfied by $\psi$.

\begin{definition}\label{def:NRD}
Given a predicate $P \subseteq D^r$, we say that an $r$-uniform hypergraph $H = (V, E)$ is a \emph{non-redundant} (NRD) instance of $\CSP(P)$ if for every edge $e \in E$, there exists an assignment $\psi_e : V \to D$ such that $\psi_e$ satisfies $e'$ for every $e' \in E \setminus \{e\}$, but $\psi_e$ does \emph{not} satisfy $e$. We define $\NRD(P, n)$ to be the size of the largest non-redundant instance of $\CSP(P)$ on $n$ vertices.
\end{definition}

In logical terms, the condition of NRD is equivalent to none of the clauses of the instance being implied by any of the others. If $P$ is nontrivial, then $\NRD(P, n) \ge \lfloor n/r\rfloor$ by considering a collection of vertex-disjoint hyperedges. As such, linear non-redundancy is the `baseline' for every non-redundancy problem.

\paragraph{Conditional non-redundancy.} We also make use of the notion of \emph{conditional} non-redundancy coined\footnote{A notion similar to conditional non-redundancy was used as a proof technique in an earlier paper by Bessiere, Carbonnel, and Katsirelos~\cite{BCK20}.} by Brakensiek and Guruswami~\cite{BG25}.

\begin{definition}\label{def:conditional-nrd}
Given predicates $P \subsetneq Q \subseteq D^r$, we say that an $r$-uniform hypergraph $H = (V, E)$ is a \emph{(conditionally) non-redundant} instance of $\CSP(P \mid Q)$ if for every edge $e \in E$, there exists an assignment $\psi_e : V \to D$ such that $\psi_e$ $P$-satisfies $e'$ for every $e' \in E \setminus \{e\}$, but $e$ is $Q\setminus P$-satisfied by $\psi_e$. We define $\NRD(P \mid Q, n)$ to be the size of the largest non-redundant instance of $\CSP(P \mid Q)$ on $n$ vertices.
\end{definition}

We note that $\NRD(P \mid D^r, n) = \NRD(P, n)$ as being $D^r \setminus P$-satisfying is equivalent to not being $P$-satisfying. We make use of the following triangle inequality due to \cite{BG25}.

\begin{lemma}[Triangle inequality~\cite{BG25}]\label{lem:NRD-triangle}
For any predicates $P \subsetneq Q \subsetneq R \subseteq D^r$ and $n \in \N$, we have that
\[
    \NRD(P \mid R, n) \le \NRD(P \mid Q, n) + \NRD(Q \mid R, n).
\]
In particular, if $R = D^r$, we have that
\[
    \NRD(P, n) \le \NRD(P \mid Q, n) + \NRD(Q, n).
\]
\end{lemma}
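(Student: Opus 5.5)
We prove the triangle inequality directly from Definition~\ref{def:conditional-nrd}: split a non-redundant instance of $\CSP(P \mid R)$ into two parts, one non-redundant for $\CSP(P \mid Q)$ and one for $\CSP(Q \mid R)$. Fix a largest non-redundant instance $H = (V, E)$ of $\CSP(P \mid R)$ on $n$ vertices, with witnesses $\psi_e$ for each $e \in E$. Each $\psi_e$ $P$-satisfies every $e' \neq e$ and maps $e$ into $R \setminus P$. Since $P \subsetneq Q \subsetneq R$, we have the disjoint union $R \setminus P = (Q \setminus P) \cup (R \setminus Q)$. This lets us partition the edges as
\[
E = E_1 \cup E_2, \qquad E_1 := \{e \in E : \psi_e(e) \in Q \setminus P\}, \qquad E_2 := \{e \in E : \psi_e(e) \in R \setminus Q\}.
\]

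First, $H_1 = (V, E_1)$ is a non-redundant instance of $\CSP(P \mid Q)$. For $e \in E_1$, the assignment $\psi_e$ $P$-satisfies every $e' \in E_1 \setminus \{e\} \subseteq E \setminus \{e\}$. It also $(Q\setminus P)$-satisfies $e$ by the definition of $E_1$. Hence $|E_1| \le \NRD(P \mid Q, n)$.

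Second, $H_2 = (V, E_2)$ is a non-redundant instance of $\CSP(Q \mid R)$. For $e \in E_2$, the assignment $\psi_e$ maps every other edge of $E_2$ into $P \subseteq Q$, so those edges are $Q$-satisfied. It maps $e$ into $R \setminus Q$ by the definition of $E_2$. Hence $|E_2| \le \NRD(Q \mid R, n)$.

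Adding the two bounds gives $\NRD(P \mid R, n) = |E| = |E_1| + |E_2| \le \NRD(P \mid Q, n) + \NRD(Q \mid R, n)$.

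For the special case, take $R = D^r$ and use the observation stated just before the lemma, $\NRD(Q \mid D^r, n) = \NRD(Q, n)$ and $\NRD(P \mid D^r, n) = \NRD(P, n)$. The lemma's hypothesis $Q \subsetneq R$ is not needed here: if $Q = D^r$ then $E_2 = \emptyset$ and the first bound alone suffices.

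The only point that needs care is the second part. We must use that the witnesses $P$-satisfy, and not merely $Q$-satisfy, the other edges, so that the same assignments serve as witnesses in both sub-instances.
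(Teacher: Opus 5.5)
Your proof is correct and complete. Splitting the edges of a non-redundant $\CSP(P \mid R)$ instance according to whether the fixed witness $\psi_e$ sends $e$ into $Q \setminus P$ or into $R \setminus Q$, and reusing the same witnesses for both parts (valid because they $P$-satisfy, hence also $Q$-satisfy, every other edge), is the standard argument for this inequality; the paper gives no proof of its own and simply cites \cite{BG25}.
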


\begin{remark}
Using techniques similar to \cite[Lemma 4.9]{BGJLW25} and \cite[Proposition 2.4]{BGP26}, one can show that \emph{any} non-trivial conditional predicate $\CSP(P \mid Q)$ has a corresponding ``traditional'' predicate $\CSP(R)$ such that $\NRD(R, n) = \NRD(P \mid Q, n) \cdot \Theta(n^c)$, where $c$ is a fixed constant depending only on $P,Q,R$---we prove this in Appendix~\ref{app:conditional}. Thus, studying conditional non-redundancy, is \emph{necessary} toward understanding the broader NRD landscape.
\end{remark}

\paragraph{$r$-partite instances.} We say that an $r$-uniform hypergraph $H = (V, E)$ is \emph{$r$-partite}, if there exists a partition $V_1 \sqcup \cdots \sqcup V_r = V$ of the vertices such that $E \subseteq V_1 \times V_2 \times \cdots \times V_r$. We use the notation $H = (V_1, \hdots, V_r, E)$ to denote an $r$-partite $r$-uniform hypergraph. By results of \cite{C22,BG25,BGJLW25}, we know that any hypergraph $H = (V, E)$ has a subhypergraph $H' = (V_1, \hdots, V_r, E')$ with $E' \subseteq E$ such that $|E'| = \Omega_r(|E|)$.\footnote{We use $\Omega_r$, $O_r$, $\Theta_r$ to denote asymptotics where the multiplicative dependence on $r$ is hidden. For example $2^r \cdot n = O_r(n)$, but $n^r \neq O_r(n)$.} Due to this fact (and for convenience) all non-redundant instances we construct in this paper are $r$-partite.

\paragraph{Projections.} Given a sets $I$ and $X$, we define an $I$-tuple over $X$ to be a sequence $(x_i \in X: i \in I)$. We denote the set of all such $I$-tuples by $X^I$. Given a subset $J \subseteq I$, we define $\pi_J : X^I \to X^J$ which ``forgets'' all coordinates not in $J$. This function may be applied to individual $I$-tuples or to a set of $I$-tuples. For example, if we consider $R = \{(0,1,2),(1,2,0),(2,0,0)\}$, then
\[
  \pi_{\{1,3\}} R = \{\pi_{\{1,3\}}(0,1,2),\pi_{\{1,3\}}(1,2,0),\pi_{\{1,3\}}(2,0,0)\} = \{(0,2),(1,0),(2,0)\}.
\]

Given an $r$-partite hypergraph $H = (V_1, \hdots, V_r, E)$ as well as a subset $I \subseteq [r] := \{1,2, \hdots, r\}$, we define $\pi_I H$ to be the $I$-partite hypergraph $\pi_I H = ((V_i : i \in I), \pi_I E)$. In other words, we ``forget'' the parts of $H$ not indexed by $I$.

\paragraph{Gadget Reductions.} Previous papers \cite{C22,BG25,BGJLW25} have made use of \emph{functionally guarded primitive promise (fgpp)} definitions (and generalizations) as a type of gadget reduction that can be used to relate the NRD of various CSP predicates. For the lower bounds we prove in this paper, we do not need the full suite of techniques for fgpp, but rather focus on a simpler class of definitions called \emph{substructures} (compare with the less general ``projections'' of \cite{KPS25,BGP26}).

\begin{definition}\label{def:projection}
Consider predicates $P \subsetneq Q \subseteq D_1^{r_1}$ and $R \subsetneq S \subseteq D_2^{r_2}$. Given a sequence $\mathcal I = (I_1, \hdots, I_{r_2})$ of subsets of $[r_1]$, we say that $P \mid Q$ is an \emph{$\mathcal I$-substructure} of $R \mid S$ if there exist maps $\{g_j : D_1^{I_j} \to D_2 \mid j \in [r_2]\}$ (called the \emph{witnessing maps}) such that  the following two conditions are true.
\begin{align}
  \forall x \in P &, (g_1(\pi_{I_1} x), g_2(\pi_{I_2} x), \hdots, g_{r_2}(\pi_{I_{r_2}} x)) \in R, \label{eq:P}\\
  \forall x \in Q \setminus P &, (g_1(\pi_{I_1} x), g_2(\pi_{I_2} x), \hdots, g_{r_2}(\pi_{I_{r_2}} x)) \in S \setminus R.\label{eq:Q-P}
\end{align}
\end{definition}

In practice, it is convenient to think about an $\mathcal I$-substructure directly as a map $\Sigma : Q \to S$ subject to some restrictions on $\Sigma$ based on the nature of $\mathcal I$.
We encode this more precisely as follows.
\begin{proposition}\label{prop:projection-compact}
Consider predicates $P \subsetneq Q \subseteq D_1^{r_1}$ and $R \subsetneq S \subseteq D_2^{r_2}$. Given a sequence $\mathcal I = (I_1, \hdots, I_{r_2})$ of subsets of $[r_1]$, we have that that $P \mid Q$ is an $\mathcal I$-substructure of $R \mid S$ if and only if there exists a map $\Sigma : Q \to S$ subject to the following conditions
\begin{enumerate}
\item[(1)] $\Sigma(P) \subseteq R$ and $\Sigma(Q \setminus P) \subseteq S \setminus R$.
\item[(2)] For all $i \in [r_1]$ and $j \in [r_2]$, if $i \not\in I_j$ and $x, y \in Q$ differ only in coordinate $i$, then $\Sigma(x)_j = \Sigma(y)_j$.
\end{enumerate}
\end{proposition}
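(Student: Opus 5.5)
The plan is to prove the two directions separately, passing between the family of witnessing maps $\{g_j\}$ of Definition~\ref{def:projection} and a single map $\Sigma : Q \to S$.

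\textbf{Only if.} Given witnessing maps $g_j : D_1^{I_j} \to D_2$, I will define
\[
\Sigma(x) := (g_1(\pi_{I_1} x), \hdots, g_{r_2}(\pi_{I_{r_2}} x))
\quad\text{for } x \in Q.
\]
Condition (1) is then exactly \eqref{eq:P} together with \eqref{eq:Q-P}. Both imply $\Sigma(Q) \subseteq S$, so $\Sigma$ is a map $Q \to S$. For condition (2), suppose $x, y \in Q$ differ only in a coordinate $i \notin I_j$. Then $\pi_{I_j} x = \pi_{I_j} y$, and hence $\Sigma(x)_j = g_j(\pi_{I_j} x) = g_j(\pi_{I_j} y) = \Sigma(y)_j$.

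\textbf{If.} Given $\Sigma$, I want to set $g_j(z) := \Sigma(x)_j$ for any $x \in Q$ with $\pi_{I_j} x = z$. On tuples $z \notin \pi_{I_j} Q$ I will define $g_j(z)$ arbitrarily, since conditions \eqref{eq:P} and \eqref{eq:Q-P} only ever evaluate $g_j$ on $\pi_{I_j} Q$. Once $g_j$ is well defined, \eqref{eq:P} and \eqref{eq:Q-P} follow immediately from condition (1), because $(g_1(\pi_{I_1}x), \hdots, g_{r_2}(\pi_{I_{r_2}}x)) = \Sigma(x)$ for every $x \in Q$.

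\textbf{Main obstacle: well-definedness.} Suppose $x, y \in Q$ have $\pi_{I_j} x = \pi_{I_j} y$. I need $\Sigma(x)_j = \Sigma(y)_j$, which I will get by connecting $x$ to $y$ through single-coordinate steps. The natural path is $x = x^{(0)}, x^{(1)}, \hdots, x^{(m)} = y$, where each step changes one coordinate outside $I_j$ from its $x$-value to its $y$-value. Applying (2) to each consecutive pair and chaining gives $\Sigma(x)_j = \Sigma(y)_j$.

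This argument needs every intermediate tuple $x^{(k)}$ to lie in $Q$, because (2) only speaks about pairs inside $Q$.
\begin{itemize}
\item When $Q = D_1^{r_1}$ (the unconditional case), this holds automatically, and the proof is complete.
\item For general $Q$, I will use that we may reorder the coordinate changes: any sequence of single-coordinate moves inside $Q$ that avoids $I_j$ suffices. The fibres $\{x \in Q : \pi_{I_j}x = z\}$ therefore need only be connected under such moves.
\end{itemize}

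This connectivity is the step I expect to be delicate, and as worded it is not automatic. For example, take $Q = \{00, 11\}$, $P = \{00\}$ and $I_1 = \emptyset$: condition (2) is vacuous for this $Q$, but a nonconstant $\Sigma_1$ cannot come from any $g_1$. So I plan to do two things.
\begin{enumerate}
\item Verify the connectivity property for the $Q$'s to which the proposition is applied, or prove the statement under that hypothesis.
\item Failing that, read (2) through its transitive closure, i.e.\ as asserting equality of $\Sigma(\cdot)_j$ along every chain of such moves within $Q$. That closure is exactly what the chaining argument uses.
\end{enumerate}
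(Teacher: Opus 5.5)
Your ``only if'' direction is exactly the paper's. For the ``if'' direction the paper also sets $g_j(\pi_{I_j}x) := \Sigma(x)_j$. It then simply asserts that condition (2) makes this well defined, which is precisely the step you flagged. Your counterexample is valid, so the ``if'' direction as literally stated is false. Concretely, take $D_1=D_2=\{0,1\}$, $r_1=2$, $r_2=1$, $P=\{00\}$, $Q=\{00,11\}$, $R=\{0\}$, $S=\{0,1\}$ and $I_1=\emptyset$. The map $\Sigma(00)=0$, $\Sigma(11)=1$ satisfies (1), and it satisfies (2) vacuously because no two distinct points of $Q$ differ in a single coordinate. Yet every $g_1 : D_1^{\emptyset}\to D_2$ is constant, so no witnessing maps exist. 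You have found an error in the paper's statement and proof, not a hole in your own argument.

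Of your two remedies, only the first works. You can replace (2) by the fibre condition ``$x,y\in Q$ and $\pi_{I_j}x=\pi_{I_j}y$ imply $\Sigma(x)_j=\Sigma(y)_j$'', after which the equivalence is immediate. Alternatively, keep (2) and add the hypothesis that every fibre $\{x\in Q:\pi_{I_j}x=z\}$ is connected under single-coordinate moves outside $I_j$. That hypothesis holds for $Q=D_1^{r_1}$ by your chaining argument. It also holds trivially whenever $|I_j|\ge r_1-1$, since two points of such a fibre differ in at most one coordinate, and that coordinate lies outside $I_j$. This covers every use of the proposition in the paper: Lemmas~\ref{lem:J1}, \ref{lem:J2} and \ref{lem:P3Q3} have $r_1=4$, $|I_j|=3$, and Lemmas~\ref{lem:P1Q1} and \ref{lem:P2Q2} have $r_1=3$, $|I_j|=2$. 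So nothing downstream breaks. Your second remedy is not a repair. Equality is transitive, so (2) already forces $\Sigma(\cdot)_j$ to be constant along every chain of such moves, and the ``transitive closure'' reading is equivalent to (2) itself. It says nothing about fibres with no connecting chain, which is exactly the situation in your counterexample.
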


\begin{proof}
First assume that $P \mid Q$ is an $\mathcal I$-substructure of $R \mid S$ with witnessing maps $g_1, \hdots, g_{r_2}$. Define
\[
  \Sigma(x) = (g_1(\pi_{I_1} x), g_2(\pi_{I_2} x), \hdots, g_{r_2}(\pi_{I_{r_2}} x)).
\]
Then (\ref{eq:P}) and (\ref{eq:Q-P}) ensure that (1) holds. Further, since the $j$th coordinate (for $j \in [r_2]$) of $\Sigma(x)$ depends only on $\pi_{I_j} x$, we have that (2) holds as well.

Conversely, given $\Sigma : Q \to S$ subject to (1) and (2), for all $j \in [r_2]$ let $g_j$ be defined such that $g_j(\pi_{I_j} x) = \Sigma_j(x)$. (If some inputs of $g_j$ are undefined, pick them arbitrarily). Condition (2) on $\Sigma$ ensures this choice of $g_j$ is well-defined. Further condition (1) ensure that (\ref{eq:P}) and (\ref{eq:Q-P}) are satisfied, as desired.
\end{proof}

If each $|I_j| = 1$, then we recover the projections framework of Khanna, Putterman, and Sudan~\cite{KPS25} and a fragment\footnote{The fgpp reduction framework of \cite{C22} and the $c$-fgpp framework of \cite{BGJLW25} also allow for conjugation of predicates. For example, if $(x_1,x_2,x_3) \in R \iff [(x_1,x_2) \in P]  \wedge [(x_2,x_3) \in Q]$ then $\NRD(R, n) \le O(\NRD(P, n) + \NRD(Q, n))$. The hypergraph projections framework in this paper could also be augmented with such operations, but we omit such details for clarity and succinctness of presentation.} of the fgpp-definition framework of Carbonnel~\cite{C22}. If each $|I_j| = c$ for some $c \ge 1$, then we recover a fragment of the $c$-fgpp-definition framework of \cite{BGJLW25}. In Section~\ref{sec:shrink-lb}, we show that if $P \mid Q$ is an $\mathcal I$-substructure of $R \mid S$, then a non-redundant instance of $\CSP(P \mid Q)$ can be converted into a non-redundant instance of $\CSP(R \mid S)$.

\begin{remark}\label{remark:3LIN*}
As a non-trivial example of Definition~\ref{def:projection}, consider from \cite{BG25} the example
\begin{align*}
  P &= \{0,1\}^3 \setminus \{(0,0,0)\}\\
  Q &= \{0,1\}^3\\
  R &= \{(0,1,2),(0,2,1),(1,0,2),(1,1,1),(1,2,0),(2,0,1),(2,1,0),(2,2,2)\}\\
  S &= R \cup \{(0,0,0)\}.
\end{align*}
Let $\mathcal I= \{\{1,2\},\{1,3\},\{2,3\}\}.$ We have that $P \mid Q$ is an $\mathcal I$-substructure of $R \mid S$ via the maps $g_1, g_2, g_3$ such that
\begin{align*}
&g_1(0, 0) = 0, g_1(0, 1) = 1, g_1(1, 0) = 2, g_1(1, 1) = 0\\
&g_2(0, 0) = 0, g_2(0, 1) = 1, g_2(1, 0) = 1, g_2(1, 1) = 2\\
&g_3(0, 0) = 0, g_3(0, 1) = 2, g_3(1, 0) = 2, g_3(1, 1) = 1.
\end{align*}
To confirm that (\ref{eq:P}) and (\ref{eq:Q-P}) hold, consult the following table.
\begin{center}
\begin{tabular}{c|c}
$x$ & $(g_1(\pi_{\{1,2\}} x), g_2(\pi_{\{1,3\}} x), g_3(\pi_{\{2,3\}} x))$\\\hline
$(0,0,0)$ & $(0,0,0)$\\
$(0,0,1)$ & $(0,1,2)$\\
$(0,1,0)$ & $(1,0,2)$\\
$(0,1,1)$ & $(1,1,1)$\\
$(1,0,0)$ & $(2,1,0)$\\
$(1,0,1)$ & $(2,2,2)$\\
$(1,1,0)$ & $(0,1,2)$\\
$(1,1,1)$ & $(0,2,1)$
\end{tabular}
\end{center}
The map $x \mapsto (g_1(\pi_{\{1,2\}} x), g_2(\pi_{\{1,3\}} x), g_3(\pi_{\{2,3\}} x))$ is precisely the map $\Sigma$ predicated by Proposition~\ref{prop:projection-compact}.
In Section~\ref{sec:applications}, we construct much more elaborate reductions.
\end{remark}

\section{NRD Lower Bounds from Shrinking Instances}\label{sec:shrink-lb}

In this section, we prove the main result of this paper, which is a general technique for proving superlinear lower-bounds for CSP non-redundancy. A crucial element of our lower bound technique is that they do not apply to general non-redundant instances but only that of \emph{shrinking instances}.

\begin{definition}\label{def:shrink}
Given an $r$-partite hypergraph $H = (V_1, \hdots, V_r, E)$, a family of sets $\mathcal I$ of subsets of $[r]$, and a parameter $\lambda \ge 1$, we say that $H$ is \emph{$(\lambda, \mathcal I)$-shrinking} if for all $I \in \mathcal I$, we have that $|\pi_I E| \le |E| / \lambda$. We say that $H$ is \emph{$\lambda$-shrinking} if $H$ is $(\lambda, I)$-shrinking for all $I \subsetneq [r]$. 
\end{definition}

In this work we are interested in shrinkage $\lambda = |E|^{\Omega(1)}$. If a non-redundant instance $H = (V, E \subseteq V^r)$ of $\CSP(P \mid Q)$ is $(|E|^{\Omega(1)}, \mathcal I)$-shrinking where $\mathcal I = \{\{i\} : i \in [r]\}$ is the set of all singleton sets, then we can immediately deduce that the non-redundancy of $\CSP(P \mid Q)$ is superlinear. However, shrinkage for larger subsets is useful in reductions, which can be realized through the notion of a \emph{projection hypergraph}.

\begin{definition}\label{def:projection-hypergraph}
Given an $r$-partite hypergraph $H = (V_1, \hdots, V_r, E)$ and a family of sets $\mathcal I = (I_1, \hdots, I_\ell)$ of subsets of $[r]$ (possibly with repetition), define the $\ell$-partite \emph{projection hypergraph} $\pr_{\mathcal I} H = (\pi_{I_1} E, \hdots, \pi_{I_\ell} E, \pr_{\mathcal I}(E))$, where
\[
  \pr_{\mathcal I}(E) = \{\pr_{\mathcal I}e := (\pi_{I_1} e, \hdots, \pi_{I_\ell} e) \mid e \in E\}.
\]
\end{definition}

We now make the following observation which can be viewed as a subtle but substantial improvement on the techniques of \cite{BGJLW25}.

\begin{lemma}\label{lemma:projection-NRD}
Consider predicates $P \subsetneq Q \subseteq D_1^{r_1}$ and $R \subsetneq S \subseteq D_2^{r_2}$. Given a sequence $\mathcal I = (I_1, \hdots, I_{r_2})$ of subsets of $[r_1]$, assume that that $P \mid Q$ is an $\mathcal I$-substructure of $R \mid S$ with witnessing maps $g_1, \hdots, g_{r_2}$. If $H = (V_1, \hdots, V_{r_1}, E)$ is a non-redundant instance of $\CSP(P \mid Q)$, then $\pr_{\mathcal I} H = (\pi_{I_1} E, \hdots, \pi_{I_{r_2}} E, \pr_{\mathcal I}(E))$ is a non-redundant instance of $\CSP(R \mid S)$.
\end{lemma}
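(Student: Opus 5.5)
We transport the witnessing assignments of $H$ through the witnessing maps. For each edge $e \in E$, non-redundancy of $H$ as an instance of $\CSP(P \mid Q)$ supplies an assignment $\psi_e : V_1 \sqcup \cdots \sqcup V_{r_1} \to D_1$. Under $\psi_e$, every $e' \in E \setminus \{e\}$ satisfies $\psi_e(e') \in P$, and $\psi_e(e) \in Q \setminus P$. Given such a $\psi_e$, we define an assignment $\phi_e$ on the vertices of $\pr_{\mathcal I} H$. For each $j \in [r_2]$ and each vertex $f \in \pi_{I_j} E$ (an $I_j$-tuple of vertices), we set
\[
  \phi_e(f) := g_j(\psi_e(f)),
\]
where $\psi_e(f) \in D_1^{I_j}$ is obtained by applying $\psi_e$ coordinatewise. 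Since the parts $\pi_{I_1} E, \hdots, \pi_{I_{r_2}} E$ are taken as a disjoint union, even if $I_j = I_{j'}$ or the tuples coincide, $\phi_e$ is well defined.

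The key identity is that for every $e' \in E$,
\[
  \phi_e(\pr_{\mathcal I} e') = \bigl(g_1(\pi_{I_1}\psi_e(e')), \hdots, g_{r_2}(\pi_{I_{r_2}}\psi_e(e'))\bigr).
\]
This holds because $\psi_e(\pi_{I_j} e') = \pi_{I_j}\psi_e(e')$, as projection commutes with coordinatewise application of $\psi_e$. With this identity, (\ref{eq:P}) applied to $x = \psi_e(e') \in P$ shows that every $\pr_{\mathcal I} e'$ with $e' \neq e$ is $R$-satisfied. Likewise, (\ref{eq:Q-P}) applied to $x = \psi_e(e) \in Q \setminus P$ shows that $\pr_{\mathcal I} e$ is $S\setminus R$-satisfied.

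The one subtle point is that distinct $e' \neq e$ might yield $\pr_{\mathcal I} e' = \pr_{\mathcal I} e$. That would leave a single tuple required to be both in $R$ and in $S \setminus R$. However, the computation above already rules this out. If $\pr_{\mathcal I} e' = \pr_{\mathcal I} e$, then $\phi_e$ assigns both the same value, which would have to lie in $R$ and in $S\setminus R$ at once, a contradiction. Hence $e \mapsto \pr_{\mathcal I} e$ is injective, and for every edge $\pr_{\mathcal I} e$ of $\pr_{\mathcal I}(E)$, the assignment $\phi_e$ witnesses its conditional non-redundancy. This establishes that $\pr_{\mathcal I} H$ is a non-redundant instance of $\CSP(R \mid S)$ with $|\pr_{\mathcal I}(E)| = |E|$.

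I expect no real obstacle. The only care needed is in the well-definedness of $\phi_e$ through the disjoint-union convention and in this injectivity remark.
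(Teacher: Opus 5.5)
Your proposal is correct and follows the paper's proof essentially step for step. You transport each witness $\psi_e$ through the maps $g_j$ to get $\phi_e$, use that projection commutes with coordinatewise assignment, and invoke the two substructure conditions. Your injectivity remark is a worthwhile detail the paper leaves implicit: $\phi_e$ separates $\pr_{\mathcal I} e$ from every other $\pr_{\mathcal I} e'$, so $|\pr_{\mathcal I}(E)| = |E|$, and the proof of Theorem~\ref{thm:shrink-lb} relies on this.
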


\begin{proof}
By that fact that $H$ is a non-redundant instance of $\CSP(P \mid Q)$, for every $e \in E$, there exists $\psi_{e} : V_1 \sqcup \cdots \sqcup V_{r_1} \to D_1$ such that $\psi_e(e') \in P$ for all $e' \in E \setminus \{e\}$ but $\psi_e(e) \in Q \setminus P.$

Let $\hat{V} := \bigsqcup_{i=1}^{r_2} \pi_{I_i} E$ be the vertices of $\pr_{\mathcal I}H$. For each $e \in E$, define $\phi_e : \hat{V} \to D_2$ as follows.
\[
  \forall i \in [r_2], \forall t \in \pi_{I_i}(E),\ \ \ \phi_e(t) := g_i(\psi_e(t)).
\]
In words, we apply $\psi_e$ to the edges of $\pi_{I_i}(E)$, then we convert each assigned edge to a single value with $g_i$. Now, consider an arbitrary $e' \in E$, note that
\begin{align*}
  \phi_e(\pr_{\mathcal I} e') &= (\phi_e(\pi_{I_1} e'), \hdots, \phi_e(\pi_{I_{r_2}} e'))\\
                              &=  (g_1(\psi_e(\pi_{I_1} e')), \hdots, g_{r_2}(\psi_e(\pi_{I_{r_2}} e')))\\
                              &=  (g_1(\pi_{I_1}\psi_e(e')), \hdots, g_{r_2}(\pi_{I_{r_2}} \psi_e(e'))).
\end{align*}
Now, if $e' \in E \setminus \{e\}$, then $\psi_e(e') \in P$. Thus, by (\ref{eq:P}), we have that $\phi_e(\pr_{\mathcal I} e') \in R$. Likewise, if $e' = e$, then $\psi_e(e) \in Q \setminus P$. Thus, by (\ref{eq:Q-P}), we have that $\phi_e(\pr_{\mathcal I} e) \in S \setminus R$. This completes the proof that $\pr_{\mathcal I} H$ is a non-redundant instance of $\CSP(R \mid S)$.
\end{proof}

\begin{remark}
The primary difference between Lemma~\ref{lemma:projection-NRD} and comparable techniques in \cite{BGJLW25} is that \cite{BGJLW25} was inefficient in naively putting $|V|^{|I_i|}$ vertices appear in the $i$th part of $\pr_{\mathcal I}(E)$, whereas Lemma~\ref{lemma:projection-NRD} shows that one can be more careful in the reduction. As shall be seen in Theorem~\ref{thm:shrink-lb} and Section~\ref{sec:applications}, this observation can be the fundamental difference between a trivial lower bound and a superlinear lower bound.
\end{remark}

We can finally state our main result on deducing superlinear non-redunancy lower bounds.

\begin{theorem}\label{thm:shrink-lb}
Consider predicates $P \subsetneq Q \subseteq D_1^{r_1}$ and $R \subsetneq S \subseteq D_2^{r_2}$. Given a sequence $\mathcal I = (I_1, \hdots, I_{r_2})$ of subsets of $[r_1]$, assume that that $P \mid Q$ is an $\mathcal I$-substructure of $R \mid S$. Further let $H_1 = (V_1, E_1), H_2 = (V_2, E_2), \hdots$ be an infinite sequence of non-redundant instances of $\CSP(P \mid Q)$ with the following properties. Assume there exists absolute constants $C_1, C_2, \eps > 0$ (with $\eps < 1$) such that the following conditions hold for all $i \ge 1$:
\begin{enumerate}
  \item $H_i$ is $(C_1 \cdot |E_i|^{-\eps}, \mathcal I)$-shrinking, and
  \item $1 \le |E_{i+1}| / |E_{i}| \le C_2$.
\end{enumerate}
Then, $\NRD(R \mid S, n) = \Omega(n^{\frac{1}{1-\eps}}),$ where $\Omega$ hides constant factors dependent on $|E_1|, C_1, C_2, r_1,r_2, \eps.$
\end{theorem}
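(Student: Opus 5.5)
The plan is to apply Lemma~\ref{lemma:projection-NRD} to each $H_i$ and count vertices and edges of the resulting projection hypergraph. I read condition~1 as saying that for every $j \in [r_2]$ we have $|\pi_{I_j} E_i| \le C_1 |E_i|^{1-\eps}$; that is, the shrinkage factor is of order $|E_i|^{\eps}$, which matches the informal Theorem~\ref{thm:main}. An interpolation argument using condition~2 then fills in the values of $n$ between consecutive instances.

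\emph{Step 1: vertices and edges of $\pr_{\mathcal I} H_i$.} By Lemma~\ref{lemma:projection-NRD}, $\pr_{\mathcal I} H_i$ is a non-redundant instance of $\CSP(R \mid S)$. Its vertex count is
\[
n_i := \sum_{j=1}^{r_2} |\pi_{I_j} E_i| \le r_2 C_1 |E_i|^{1-\eps}.
\]
I must also check that it has exactly $|E_i|$ edges, i.e.\ that $e \mapsto \pr_{\mathcal I} e$ is injective on $E_i$. Suppose distinct $e, e' \in E_i$ had $\pr_{\mathcal I} e = \pr_{\mathcal I} e'$. The assignment $\phi_e$ from the proof of Lemma~\ref{lemma:projection-NRD} maps $\pr_{\mathcal I} e$ into $S \setminus R$. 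Since $e' \ne e$, it maps $\pr_{\mathcal I} e'$, which is the same tuple, into $R$. This is a contradiction, so the map is injective. Hence
\[
\NRD(R \mid S, n_i) \ge |E_i| \ge \left(\frac{n_i}{r_2 C_1}\right)^{1/(1-\eps)}.
\]

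\emph{Step 2: monotonicity in $n$.} $\NRD(R \mid S, n)$ is non-decreasing in $n$, because adding isolated vertices preserves non-redundancy. Therefore the bound from Step~1 gives $\NRD(R \mid S, n) \ge |E_i|$ for every $n \ge n_i$. In particular this holds whenever $n \ge m_i := r_2 C_1 |E_i|^{1-\eps}$.

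\emph{Step 3: interpolation, the main obstacle.} I assume $|E_i| \to \infty$; this is implicit, since otherwise the sequence of sizes is eventually constant and no asymptotic statement can follow. Under this assumption the thresholds $m_i$ are non-decreasing and unbounded. Fix $n \ge m_1$ and let $i$ be the largest index with $m_i \le n$. Then $m_{i+1} > n$, i.e.
\[
|E_{i+1}| > \left(\frac{n}{r_2 C_1}\right)^{1/(1-\eps)}.
\]
Condition~2 gives $|E_i| \ge |E_{i+1}|/C_2$, and combining with Step~2,
\[
\NRD(R \mid S, n) \ge |E_i| \ge \frac{1}{C_2}\left(\frac{n}{r_2 C_1}\right)^{1/(1-\eps)}.
\]
For $n < m_1$ the claimed bound holds after adjusting the constant, since only finitely many such $n$ exist, and for $n \ge r_2$ there is always at least one non-redundant constraint. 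This dependence on $m_1$ is where the hidden dependence of the constant on $|E_1|$ enters.

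The only delicate points are the injectivity check in Step~1 and the bookkeeping in Step~3, where condition~2 is exactly what prevents losing more than a constant factor between consecutive instances.
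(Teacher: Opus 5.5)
Your proposal is correct and follows the paper's proof essentially step for step: apply Lemma~\ref{lemma:projection-NRD}, bound the vertex count of $\pr_{\mathcal I} H_i$ by $r_2 C_1 |E_i|^{1-\eps}$, and use condition~2 to interpolate between consecutive instances. The paper does the interpolation by covering $n$ with the intervals $[r_2C_1 |E_i|^{1-\eps}, r_2C_1C_2 |E_i|^{1-\eps}]$, while you take the largest index with $m_i \le n$. The paper also uses three facts implicitly that you make explicit: the injectivity of $e \mapsto \pr_{\mathcal I} e$, the monotonicity of $\NRD$ in $n$, and the assumption $|E_i| \to \infty$. Spelling these out tightens the argument rather than changing the route.
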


\begin{proof}
By Lemma~\ref{lemma:projection-NRD}, we have that $\pr_{\mathcal I} H_i$ is a non-redundant instance of $\CSP(R \mid S)$ for all $i \ge 1$. For all $i \ge 1$, define
\begin{align*}
  m_i &= |E_i|,\\
  n_i &= \sum_{j=1}^{r_2} |\pi_{I_j} E_i| \le r_2C_1 |E_i|^{1-\eps} = r_2C_1 m_i^{1-\eps}.
\end{align*}
In particular, $\NRD(R \mid S, r_2C_1 m_i^{1-\eps}) \ge \NRD(R \mid S, n_i) \ge m_i.$ By assumption, we have that $1 \le m_{i+1} / m_i \le C_2$ for all $i \ge 1.$ Therefore, for any $n \ge r_2C_1 m_1^{1-\eps} = O(1)$, we have that there exists $i \ge 1$ such that
\[
  n \in [r_2C_1 m_i^{1-\eps}, r_2C_1C_2 m_{i}^{1-\eps}]
\]
Hence,
\[
  \NRD(R \mid S, n) \ge \NRD(R \mid S, r_2C_1 m_i^{1-\eps}) \ge m_i \ge \Omega(n^{\frac{1}{1-\eps}}),
\]
as desired.
\end{proof}

As an immediate corollary, we can recover the primary content of \cite[Theorem 7.14]{BGJLW25}.

\begin{corollary}\label{cor:richness}
Consider predicates $P \subsetneq Q \subseteq D_1^{r_1}$ and $R \subsetneq S \subseteq D_2^{r_2}$ with $P,R \neq \emptyset$. Given a sequence $\mathcal I = (I_1, \hdots, I_{r_2})$ of subsets of $[r_1]$ of size at most $c$, assume that that $P \mid Q$ is an $\mathcal I$-substructure of $R \mid S$. Assume that $\NRD(P \mid Q, n) = \Omega(n^{\alpha})$, then
\[
  \NRD(R \mid S, n) = \Omega(n^{\alpha/c}).
\]
\end{corollary}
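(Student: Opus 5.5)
Rather than go through Theorem~\ref{thm:shrink-lb}, I would argue directly from Lemma~\ref{lemma:projection-NRD} and a trivial counting bound on projections. For each $n$, take a non-redundant instance $H$ of $\CSP(P \mid Q)$ on $n$ vertices with $|E| = \NRD(P \mid Q, n) = \Omega(n^{\alpha})$. By the $r$-partite reduction recalled in the preliminaries (from \cite{C22,BG25,BGJLW25}), we may pass to an $r_1$-partite sub-instance $H = (V_1,\hdots,V_{r_1},E)$ that is still non-redundant and has $|E| = \Omega_{r_1}(n^{\alpha})$. This uses the fact that removing edges preserves non-redundancy, since the same witnessing assignments $\psi_e$ still work.

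Applying Lemma~\ref{lemma:projection-NRD} then shows that $\pr_{\mathcal I} H$ is a non-redundant instance of $\CSP(R \mid S)$ with $|\pr_{\mathcal I}(E)| = |E|$ edges. The map $e \mapsto \pr_{\mathcal I} e$ is injective here: the lemma produces, for each $e$, an assignment violating only $\pr_{\mathcal I} e$. If two distinct edges had the same image, that assignment would have to both satisfy and violate it, which is impossible.

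Next I would bound the vertex count $N = \sum_{j} |\pi_{I_j} E|$. Since $\pi_{I_j} E \subseteq \prod_{i\in I_j} V_i$, we get $|\pi_{I_j} E| \le n^{|I_j|} \le n^c$, so $N \le r_2 n^c$. (If some $I_j = \emptyset$, that part has a single vertex, which is harmless.) Adding isolated vertices does not affect non-redundancy, so $\NRD(R \mid S, r_2 n^c) \ge \Omega(n^{\alpha})$.

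The only real obstacle is interpolating between the values $N = r_2 n^c$ to cover all $N$. Given $N$ large, choose $n = \lfloor (N/r_2)^{1/c} \rfloor$, so that $r_2 n^c \le N$. By monotonicity of $\NRD$ in the number of vertices (pad with isolated vertices), $\NRD(R\mid S, N) \ge \Omega(n^\alpha) = \Omega(N^{\alpha/c})$, with constants depending on $r_1, r_2, c$.

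Small $N$ is handled by the nontriviality assumptions: $R \ne \emptyset$ and $S \setminus R \ne \emptyset$ give $\NRD \ge 1$ via a single edge.
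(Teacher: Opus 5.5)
Your argument is correct. It differs from the paper's mainly in packaging: the paper proves the corollary by checking the hypotheses of Theorem~\ref{thm:shrink-lb}, whereas you inline the content of that theorem's proof.

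Concretely, the paper proceeds in three steps.
\begin{enumerate}
\item It disposes of the case $\alpha \le c$ with the trivial linear lower bound, since the theorem requires $0 < \eps < 1$.
\item It takes instances $H_n$ with $|E_n| = \Theta(n^{\alpha})$ and non-decreasing edge counts, so that the bounded-ratio condition $|E_{n+1}|/|E_n| \le C_2$ holds. This tacitly requires deleting edges, which is harmless because sub-instances remain non-redundant.
\item It notes the trivial shrinkage $|\pi_I E_n| \le n^c$ and applies the theorem with $\eps = 1 - c/\alpha$.
\end{enumerate}

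You use only the lower bound $|E| = \Omega(n^{\alpha})$ and interpolate via monotonicity of $\NRD(R \mid S, \cdot)$ in the number of vertices, taking $n = \lfloor (N/r_2)^{1/c} \rfloor$. So you need neither the case split nor the ratio condition, and your argument covers all $\alpha$ uniformly. You also make explicit the injectivity of $e \mapsto \pr_{\mathcal I} e$. The proof of Theorem~\ref{thm:shrink-lb} uses this injectivity tacitly when it asserts $\NRD(R \mid S, n_i) \ge m_i$.

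The paper's route buys a conceptual point. It presents the $c$-fgpp bound of \cite{BGJLW25} as the special case of the shrinking framework that uses only the trivial shrinkage $|\pi_I E| \le n^c$. That is exactly the contrast with the nontrivial shrinking instances of Section~\ref{sec:applications}. Your route is more elementary and self-contained.

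Your closing remark about small $N$ is unnecessary for an asymptotic bound. It is also only valid for $N \ge r_2$: in general a single non-redundant edge may need $r_2$ distinct vertices.
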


\begin{proof}
Since $R$ is non-trivial, a lower bound of $\NRD(R \mid S, n) = \Omega(n)$ is immediate by considering an instance with $n / r_2$ disjoint hyperedges. Thus, we may assume that $\alpha > c$.

For a fixed $n \in \mathbb N$, let $H_n = (V_n, E_n)$ be a non-redundant instance of $\CSP(P \mid Q)$ with $|V_n| = n$ vertices and $|E_n| = \Theta(n^{\alpha})$ edges (and also assume the number of edges is non-decreasing as $n$ increases). For any $I \subseteq [r_1]$ of size at most $c$, we have  $|\pi_I E_n| \le n^{c}$. Thus,
\[
  \frac{|\pi_i E_n|}{|E_n|} = O\left(\frac{n^c}{n^{\alpha}}\right) = O(|E_n|^{1 - c/\alpha}).
\]
That is, $H_n$ is $(O(|E_n|^{1 - c/\alpha}), \mathcal I)$-shrinking. Furthermore, we can see that $\frac{|E_{n+1}|}{|E_n|} = \Theta(\frac{(n+1)^{\alpha}}{n^{\alpha}}) = \Theta(1)$. Therefore, the hypothesis of Theorem~\ref{thm:shrink-lb} for $\eps = 1 - c/\alpha$ hold. Thus,
\[
  \NRD(R \mid S, n) = \Omega(n^{\frac{1}{1-\eps}}) = \Omega(n^{\alpha/c}),
\]
as desired.
\end{proof}

\section{Applications}\label{sec:applications}

In this section, we repeatedly apply Theorem~\ref{thm:shrink-lb} to prove new lower bounds for a variety of (conditional) CSP predicates.

\subsection{Shrinking Instances via Extremal Combinatorics}

We now construct $\lambda$-shrinking non-redundant instances for two conditional predicates.
Adopting notation form \cite{BGJLW25}, define $C_6$ to be the $6$-cycle written as a binary predicate of domain size $3$:
\begin{align*}
  C_6 &:= \{(0,0),(0,1),(1,0),(1,2),(2,1),(2,2)\}\\
  C_6^{*} &:= C_6 \setminus \{(0,0)\}.
\end{align*}

We recall the following fact about $\CSP(C_6^{*} \mid C_6)$.

\begin{proposition}[Lemma 4.3~\cite{BGJLW25}]\label{prop:girth-6}
A bipartite graph $G = (V_1, V_2, E)$ is a non-redundant instance of $\CSP(C_6^{*} \mid C_6)$ if and only if $G$ has girth at least $6$.
\end{proposition}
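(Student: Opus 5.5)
The plan is to prove the two directions separately. In both, I view an edge $e=(u,v)$ with $u\in V_1$, $v\in V_2$ as a constraint whose first coordinate is the $V_1$-endpoint. Unwinding Definition~\ref{def:conditional-nrd}, $G$ is non-redundant for $\CSP(C_6^{*}\mid C_6)$ exactly when, for every edge $(u,v)$, there is an assignment $\psi$ with $\psi(u)=\psi(v)=0$ (the unique tuple of $C_6\setminus C_6^*$) such that every other edge is sent into $C_6^*$. Since $G$ is bipartite it has no odd cycles, so girth at least $6$ just means $G$ has no $4$-cycles (edges form a set, so there are no parallel edges). The facts about $C_6^*$ I will use are read off its list of tuples: the only tuple with first coordinate $0$ is $(0,1)$, the only tuple with second coordinate $0$ is $(1,0)$, and $(1,1)\notin C_6$.

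\emph{Only if.} Suppose $G$ contains a $4$-cycle $u, v, u', v'$ with $u,u'\in V_1$ and $v,v'\in V_2$, and take the witnessing assignment $\psi$ for the edge $(u,v)$, so $\psi(u)=\psi(v)=0$. The edge $(u',v)$ must be sent into $C_6^*$ with second coordinate $0$, which forces $\psi(u')=1$. Likewise the edge $(u,v')$ has first coordinate $0$, which forces $\psi(v')=1$. Then the edge $(u',v')$ receives $(1,1)\notin C_6^*$, a contradiction. So $G$ has no $4$-cycle, and hence girth at least $6$.

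\emph{If.} Assume $G$ has no $4$-cycle and fix an edge $(u,v)$. I define $\psi$ explicitly:
\begin{itemize}
\item $\psi(u)=\psi(v)=0$;
\item $\psi(x)=1$ for every $x\in V_1\setminus\{u\}$ adjacent to $v$, and $\psi(y)=1$ for every $y\in V_2\setminus\{v\}$ adjacent to $u$;
\item $\psi(w)=2$ for every remaining vertex $w$.
\end{itemize}
Then I check every edge $(x,y)\neq(u,v)$ by cases on its endpoints.
\begin{itemize}
\item If $x=u$, then $y$ is a neighbour of $u$ other than $v$, so it receives $(0,1)$.
\item If $y=v$, then $x$ is a neighbour of $v$ other than $u$, so it receives $(1,0)$.
\item Otherwise the edge receives one of $(1,1)$, $(1,2)$, $(2,1)$, $(2,2)$, and the last three lie in $C_6^*$.
\end{itemize}
The only case needing the hypothesis is $(1,1)$. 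It occurs exactly when $x\sim v$ and $y\sim u$, and then $u,v,x,y$ form a $4$-cycle, which is excluded. Finally, $(u,v)$ itself receives $(0,0)\in C_6\setminus C_6^*$, so $\psi$ witnesses non-redundancy of $(u,v)$.

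The only step requiring care is choosing $\psi$ in the "if" direction. The construction above works because $C_6^*$ is a path in which the two vertices labelled $0$ are endpoints (one on each side), each adjacent only to a vertex labelled $1$ on the opposite side, and the two middle values $1,2$ cover everything else. So the proof is a short case check rather than anything involving distances beyond radius one.
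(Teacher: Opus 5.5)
Your proof is correct and complete. In the ``only if'' direction, the two cycle edges at $u$ and at $v$ force $\psi(u')=\psi(v')=1$, which gives the forbidden pair $(1,1)$. In the ``if'' direction, your $0/1/2$ assignment is well defined, the case split is exhaustive, and the no-$4$-cycle hypothesis is used exactly where $(1,1)$ could arise.

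The paper does not prove this proposition itself; it imports it as Lemma~4.3 of \cite{BGJLW25}. Your argument is a natural direct verification. An equivalent view: $C_6^*$ is a path with five edges whose two ends carry the value $0$, and a witness for $(u,v)$ is a side-respecting homomorphism from $G$ minus that edge onto this path sending $u$ and $v$ to the two ends. Such a homomorphism exists exactly when $u$ and $v$ are at distance at least $5$ once that edge is deleted.
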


It is well-known that there are bipartite graphs of girth $6$ on $n$ vertices and $\Omega(n^{1.5})$ edges~\cite{erdos1964problem}, so $\NRD(C_6^* \mid C_6, n) \ge \Omega(n^{1.5})$. We use this fact in the construction of more elaborate predicates.

\begin{definition}\label{def:box-product}
Given predicates $P_1 \subsetneq Q_1 \subseteq D^{r_1}$ and $P_2 \subsetneq Q_2 \subseteq D^{r_2}$, we define their \emph{box product} $(P_1 \mid Q_1) \boxtimes (P_2 \mid Q_2) = (P_3 \mid Q_3)$ with $P_3 \subsetneq Q_3 \subseteq D^{r_1+r_2}$ as follows.
\begin{align*} 
  Q_3 &:= Q_1 \times Q_2 = \{(t_1, \hdots, t_{r_1}, t'_{1}, \hdots, t'_{r_2}) : t \in Q_1, t' \in Q_2\}\\
  P_3 &:= (P_1 \times Q_2) \cup (Q_1 \times P_2). 
\end{align*}
\end{definition}

Using these relations, we build two conditional predicates as follows.
\begin{align*}
 R_1 \mid S_1 &:= (C_6^{*} \mid C_6) \boxtimes (\{1,2\} \mid \{0,1,2\})\\
 R_2 \mid S_2 &:= (C_6^{*} \mid C_6) \boxtimes (C_6^{*} \mid C_6)
\end{align*}
\begin{remark}
The conditional predicate $R_1 \mid S_1$ is only needed for lower bounds of non-Boolean instances considered in Appendix~\ref{app:non-Boolean}. However, the construction of non-redundant shrinking instances of $\CSP(R_1 \mid S_1)$ is a good warm-up to the construction of non-redundant shrinking instances of $\CSP(R_2 \mid S_2)$, so we present both in this section.
\end{remark}

In the next two lemmas, we construct shrinking instances of each.
\begin{lemma}\label{lem:R1S1}
For all but finitely many $m$, $\CSP(R_1 \mid S_1)$ has non-redundant instances with $m$ edges which are $O(m^{1/4})$-shrinking.
\end{lemma}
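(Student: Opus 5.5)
The plan is an explicit construction: take the product of a dense girth-$6$ bipartite graph with a set of "third-coordinate" vertices, then delete edges to hit the exact count $m$. First I unpack the predicate. By Definition~\ref{def:box-product}, $S_1 = C_6 \times \{0,1,2\}$ and $R_1 = (C_6^* \times \{0,1,2\}) \cup (C_6 \times \{1,2\})$. Hence $S_1 \setminus R_1 = \{(0,0,0)\}$.

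So a $3$-partite hypergraph $H=(V_1,V_2,V_3,E)$ is a non-redundant instance of $\CSP(R_1\mid S_1)$ if, for each $e=(a,b,w)\in E$, there is an assignment $\psi_e$ with the following properties:
\begin{itemize}[nosep]
\item every edge lands in $C_6$ on its first two coordinates;
\item $e$ is sent to $(0,0,0)$;
\item every other edge $(a',b',w')$ has either $(\psi_e(a'),\psi_e(b'))\neq(0,0)$ or $\psi_e(w')\neq 0$.
\end{itemize}

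\textbf{Construction.} Let $G=(A,B,F)$ be a bipartite graph of girth at least $6$ with $|A|=|B|=N$ and $|F|=\Theta(N^{1.5})$. For concreteness, take the point–line incidence graph of a projective plane of prime order $q$, with $N=q^2+q+1$. Let $W$ be a set of $k=\lceil N^{1/2}\rceil$ new vertices, and set $E=F\times W$, i.e.\ all triples $(a,b,w)$ with $(a,b)\in F$ and $w\in W$.

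\textbf{Non-redundancy.} Fix $e=(a,b,w)$. By Proposition~\ref{prop:girth-6}, $G$ is a non-redundant instance of $\CSP(C_6^*\mid C_6)$. So there is an assignment $\psi$ of $A\sqcup B$ with $(\psi(a),\psi(b))=(0,0)$ and every other edge of $F$ mapped into $C_6^*$. Extend $\psi$ by $\psi(w)=0$ and $\psi(w')=1$ for $w'\neq w$. Then $e\mapsto(0,0,0)\in S_1\setminus R_1$. Any edge $(a',b',w')$ with $(a',b')\neq(a,b)$ lies in $C_6^*\times\{0,1,2\}$. Any edge with $(a',b')=(a,b)$ and $w'\neq w$ lies in $C_6\times\{1,2\}$. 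Both cases are in $R_1$.

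\textbf{Shrinking.} We have $m=|E|=\Theta(N^{2})$. The projections have sizes
\[
|\pi_{\{1,2\}}E|=|F|=\Theta(N^{1.5}),\qquad |\pi_{\{1,3\}}E|,\ |\pi_{\{2,3\}}E|\le Nk=\Theta(N^{1.5}),
\]
and the singleton and empty projections are even smaller. So every proper projection has size $O(N^{1.5})=O(m^{3/4})$, i.e.\ $H$ is $\Omega(m^{1/4})$-shrinking. (I read the lemma's "$O(m^{1/4})$-shrinking" as shrinkage $\lambda=\Theta(m^{1/4})$.)

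\textbf{Exact edge counts.} This is the only mildly fiddly step. Given a large $m$, use Bertrand's postulate to choose a prime $q$ with $N^2$ between $m$ and a constant multiple of $m$, while keeping $|F|\cdot k\ge m$. Then delete arbitrary edges of $F\times W$ until exactly $m$ remain.

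Deleting edges preserves non-redundancy, since the same assignments $\psi_e$ still work on the remaining edges. It can only decrease every projection. And since $m=\Theta(N^2)$, the bound $O(N^{1.5})=O(m^{3/4})$ on each proper projection persists. This gives the claim for all sufficiently large $m$.
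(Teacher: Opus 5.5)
Your proposal is correct and is essentially the paper's construction: with $N=n^2$ and $k\approx n$, your $F\times W$ is exactly the paper's $E_{12}\times V_3$. It uses the same non-redundancy witnesses (the girth-$6$ assignment on the first two parts, and $0$ on $e_3$, $1$ elsewhere on the third) and the same count of $\Theta(n^3)$ for the projections against $|E|=\Theta(n^4)$. Your use of prime-order projective planes with Bertrand's postulate for exact edge counts, and your reading of the shrinkage as $\lambda=\Theta(m^{1/4})$ (which matches the paper's intent despite its reciprocal notation), are simply a more careful version of the paper's final step.
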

\begin{proof}
Pick an integer $n \ge 1$ and pick sets $V_1, V_2, V_3$ of size $n^2, n^2$, and $n$, respectively. Pick $E_{12} \subseteq V_1 \times V_2$ of size $\Theta(n^3)$ such that $(V_1, V_2, E_{12})$ has girth $6$. Then, let $E = E_{12} \times V_3$. By adapting the proof of \cite[Lemma 4.9]{BGJLW25}, we can show that $H_n := (V_1, V_2, V_3, E)$ is a non-redundant instance of $\CSP(R_1 \mid S_1)$. More precisely, given $e \in E$, we construct $\psi_e : V_1 \sqcup V_2 \sqcup V_3 \to \{0,1,2\}$ as follows. For $v \in V_1 \sqcup V_2$, assign $\psi_e(v)$ the assignment which $v$ is given for the edge $(e_1, e_2)$ as promised by Proposition~\ref{prop:girth-6} for the instance $(V_1, V_2, E_{12})$ of $\CSP(C_6^* \mid C_6)$. Finally, for $v \in V_3$, let $\psi_e(v)$ be $0$ if $v = e_3$ and $1$ otherwise. These assignments show that $H_n$ is a non-redundant instance of $\CSP(R_1 \mid S_1)$.

Now, observe that 
\begin{align*}
|E| &= |E_{12}| \cdot |V_3| = \Theta(n^4)\\
|\pi_{\{1,2\}} H_n| &= |E_{12}| = \Theta(n^3)\\
|\pi_{\{1,3\}} H_n| = |\pi_{\{2,3\}} H_n| &\le n^2 \cdot n = O(n^3).
\end{align*}

Thus, $H_n$ is $O(n^{-1}) = O(|E|^{-1/4})$-shrinking, as desired. To get the hypothesis of ``all but finitely many $m$'', we pick a least $n$ such that $H_n$ has at least $m$ edges and then delete $o(m)$ edges to get an instance with exactly $m$ edges. Since we are only deleting edges, the shrinking factor will change by at most a constant factor.
\end{proof}

\begin{lemma}\label{lem:R2S2}
For infinitely many $m$, $\CSP(R_2 \mid S_2)$ has non-redundant instances with $m$ edges which are $O(m^{1/6})$-shrinking.
\end{lemma}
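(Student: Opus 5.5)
The plan is to imitate the proof of Lemma~\ref{lem:R1S1}, taking a product of two girth-$6$ graphs in place of a product of a girth-$6$ graph with a vertex set. Fix a prime power $q$ and let $G=(A,B,E_A)$ and $G'=(C,D,E_B)$ be two vertex-disjoint copies of the point--line incidence graph of the projective plane of order $q$. Each side has $N:=q^2+q+1$ vertices, each graph has $\Theta(N^{1.5})$ edges, and each has girth $6$ (in fact any bipartite girth-$6$ graphs with these parameters, as in~\cite{erdos1964problem}, would do). Set $V_1=A$, $V_2=B$, $V_3=C$, $V_4=D$ and
\[
E := E_A \times E_B = \{(a,b,c,d) : (a,b)\in E_A,\ (c,d)\in E_B\},
\]
so $|E| = \Theta(N^3)$. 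Letting $q$ range over prime powers gives the infinitely many values of $m=|E|$.

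\emph{Non-redundancy.} Fix $e=(a,b,c,d)\in E$. By Proposition~\ref{prop:girth-6}, there is an assignment $\psi^{1}:A\sqcup B\to\{0,1,2\}$ under which $(a,b)$ receives $(0,0)$ and every other edge of $E_A$ lies in $C_6^*$. Likewise there is $\psi^{2}$ on $C\sqcup D$ for the edge $(c,d)$ of $E_B$. Let $\psi_e:=\psi^1\sqcup\psi^2$ and consider any $e'=(a',b',c',d')\in E$.
\begin{itemize}
\item If $(a',b')\neq(a,b)$, then $\psi_e(e')\in C_6^*\times C_6 \subseteq R_2$.
\item Symmetrically, if $(c',d')\neq(c,d)$, then $\psi_e(e')\in C_6\times C_6^* \subseteq R_2$.
\item If $e'=e$, then $\psi_e(e)=(0,0,0,0)\in (C_6\times C_6)\setminus R_2 = S_2\setminus R_2$.
\end{itemize}
Hence $E$ is a non-redundant instance of $\CSP(R_2\mid S_2)$.

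\emph{Shrinking.} For each $I\subsetneq[4]$ we bound $|\pi_I E|$ by the product of the sizes of the projections of $E_A$ onto $I\cap\{1,2\}$ and of $E_B$ onto $I\cap\{3,4\}$. Each such factor is one of the following:
\begin{itemize}
\item $1$, when $I$ contains no index from that pair;
\item $N$, when $I$ contains exactly one index from that pair;
\item $\Theta(N^{1.5})$, when $I$ contains both indices of that pair.
\end{itemize}
Since $I$ is a proper subset, at most one of the two pairs is fully contained in $I$. The worst case is therefore $|I|=3$, giving $|\pi_I E|\le N\cdot\Theta(N^{1.5})=O(N^{2.5})$. This equals $O(|E|/N^{1/2})=O(|E|/|E|^{1/6})$, so $H$ is $\Omega(m^{1/6})$-shrinking in the sense of Definition~\ref{def:shrink}, which is the intended reading of the lemma.

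The only step needing care is checking all $14$ proper subsets $I$ uniformly. The case analysis above handles this: the bound $N^{2.5}$ is attained exactly by the $3$-element sets, and the $2$-element sets $\{1,2\}$ and $\{3,4\}$ give only $N^{1.5}$. Everything else is inherited directly from Proposition~\ref{prop:girth-6}.
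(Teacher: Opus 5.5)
Your proof is correct and is essentially the paper's argument: take $E = E_{12}\times E_{34}$ for two girth-$6$ bipartite graphs, combine the two witnessing assignments from Proposition~\ref{prop:girth-6}, and compare $|\pi_I E| = O(N^{2.5})$ for $|I|=3$ against $|E| = \Theta(N^3)$. Instantiating the graphs concretely as projective-plane incidence graphs and checking every proper subset $I$ are only more explicit versions of the same proof; the $3$-element sets already suffice, since $|\pi_I E|$ is monotone in $I$.
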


\begin{proof}
Pick an integer $n \ge 1$ and pick sets $V_1, V_2, V_3, V_4$ each of size $n$. Pick $E_{12} \subseteq V_1 \times V_2$ of size $\Theta(n^{1.5})$ such that $(V_1, V_2, E_{12})$ has girth $6$. Likewise, pick $E_{34} \subseteq V_3 \times V_4$ of size $\Theta(n^{1.5})$ such that $(V_3, V_4, E_{34})$ has girth $6$.

Then, let $E = E_{12} \times E_{34}$. We claim that $H_n := (V_1, V_2, V_3, V_4, E)$ is a non-redundant instance of $\CSP(R_2 \mid S_2)$. For any $e \in E$, we construct $\psi_e : V_1 \sqcup V_2 \sqcup V_3 \sqcup V_4 \to \{0,1,2\}$ as follows. For $v \in V_1 \sqcup V_2$, assign $\psi_e(v)$ the assignment which $v$ is given for the edge $(e_1, e_2)$ as promised by Proposition~\ref{prop:girth-6} for the instance $(V_1, V_2, E_{12})$ of $\CSP(C_6^* \mid  C_6)$. Likewise, for $v \in V_3 \sqcup V_4$, assign $\psi_e(v)$ the assignment which $v$ is given for the edge $(e_3, e_4)$ as promised by Proposition~\ref{prop:girth-6} for the instance $(V_1, V_2, E_{34})$ of $\CSP(C_6^* \mid C_6)$. To see why $\psi_e$ has the desired property, note that $e' \in E$ is not satisfied by $R_2$ if and only if both $\psi_e(e'_1, e'_2) = (0,0)$ and $\psi_e(e'_3, e'_4) = (0,0)$, which happens only when $e = e'$. We have thus shown that $H_n$ is a non-redundant instance of $\CSP(R_1 \mid S_1)$.

Now, observe that $|E| = |E_{12}| \cdot |E_{34}| = \Theta(n^3)$. Furthermore, for any $I \subseteq [4]$ of size $3$, we have that $|\pi_I E| = n \cdot \Theta(n^{1.5}) = \Theta(n^{2.5})$. Thus, $H_n$ is $O(n^{-1/2}) = O(|E|^{-1/6})$-shrinking, as desired.
\end{proof}

\subsection{Boolean Lower Bounds}

Adapting notation from~\cite{BG25}, we let $\BoolBCKp \subset \{0,1\}^9$ denote the set of all $3 \times 3$ permutation matrices. More precisely, $\BoolBCKp = \{t^e, t^{(12)}, t^{(13)}, t^{(23)}, t^{(123)}, t^{(321)}\}$ where
\begin{align*}
t^{(e)} &= 100010001\\
t^{(12)} &= 010100001\\
t^{(13)} &= 001010100\\
t^{(23)} &= 100001010\\
t^{(123)} &= 001100010\\
t^{(321)} &= 010001100
\end{align*}
We then let $\BoolBCK := \{t^{(12)}, t^{(13)}, t^{(23)}, t^{(123)}, t^{(321)}\}$. One can verify that $\BoolBCKp$ is a \emph{balanced} predicate: for every $n$ and every sequence $t^1, t^2, \ldots, t^{2n}, t^{2n +1} \in \BoolBCKp$ the alternating sum (applied component-wise) $t^1 - t^2 + t^3 \ldots - t^{2n} + t^{2n+1} \in \BoolBCKp$, and thus $\NRD(\BoolBCKp, n) = O(n)$~\cite{CJP20}. Therefore, by the triangle inequality, we have that
\begin{align*}
    \NRD(\BoolBCK, n) &\le \NRD(\BoolBCK \mid \BoolBCKp, n) + \NRD(\BoolBCKp, n)\\
        &\le \NRD(\BoolBCK \mid \BoolBCKp, n) + O(n).
\end{align*}
It is straightforward to verify that $ \NRD(\BoolBCK, n) \ge \NRD(\BoolBCK \mid \BoolBCKp, n) = \Omega(n)$ (e.g., consider $n/3$ disjoint constraints), so we have that
\[
  \NRD(\BoolBCK, n) = \Theta(\NRD(\BoolBCK \mid \BoolBCKp, n)).  
\]
Numerous papers have asked about the non-redundancy of $\BoolBCK$ \cite{CJP20,LW20,KPS25,BG25,BGJLW25} or closely related predicates. In particular, due to the lack of being balanced, it is very plausible that $\BoolBCK$ (and thus $\BoolBCK \mid \BoolBCKp$) has a truly superlinear non-redundancy lower bound. We do not know how to prove such a bound using Theorem~\ref{thm:shrink-lb}. However, we can use Theorem~\ref{thm:shrink-lb} to prove a superlinear lower bound for every \emph{projection} of $\BoolBCK \mid \BoolBCKp$.

\begin{theorem}\label{thm:BoolBCK-projections}
For every $J \subsetneq [9]$ such that $\pi_J\BoolBCKp \neq \pi_J\BoolBCK$, we have that
\begin{align}
\NRD(\pi_J\BoolBCK \mid \pi_J\BoolBCKp, n) \ge \Omega(n^{6/5}).\label{eq:BoolBCK-proj-lb}
\end{align}
\end{theorem}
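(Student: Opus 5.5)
The plan is to combine Lemma~\ref{lem:R2S2} with Theorem~\ref{thm:shrink-lb}, using $R_2 \mid S_2$ as the source predicate. The exponent already matches: Lemma~\ref{lem:R2S2} gives non-redundant instances of $\CSP(R_2 \mid S_2)$ that are $O(m^{1/6})$-shrinking for \emph{every} $I \subsetneq [4]$. Theorem~\ref{thm:shrink-lb} with $\eps = 1/6$ then gives $\Omega(n^{1/(1-1/6)}) = \Omega(n^{6/5})$, exactly the bound (\ref{eq:BoolBCK-proj-lb}).

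So it suffices to show that for each admissible $J$, the predicate $R_2 \mid S_2$ is an $\mathcal I_J$-substructure of $\pi_J\BoolBCK \mid \pi_J\BoolBCKp$. Here $\mathcal I_J = (I_j : j \in J)$ should consist of proper subsets of $[4]$; I aim for sets of size at most $3$, since then the shrinking hypothesis applies to every part.

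Before using Theorem~\ref{thm:shrink-lb} I must check its growth condition $1 \le |E_{i+1}|/|E_i| \le C_2$. The instances in Lemma~\ref{lem:R2S2} have $\Theta(n^3)$ edges, so consecutive $n$ give bounded ratios. Monotonicity can be arranged either by deleting a few edges, as in the proof of Lemma~\ref{lem:R1S1}, or by passing to a subsequence with bounded ratios. Deleting edges preserves non-redundancy and changes shrinkage only by constant factors.

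\textbf{Reducing the number of cases.} If $J \subseteq J'$ and $\pi_J\BoolBCKp \ne \pi_J\BoolBCK$, then any substructure map into $\pi_J$ can be obtained from one into $\pi_{J'}$ by discarding coordinates. The conditions of Proposition~\ref{prop:projection-compact} are preserved, since the tuple $\pi_J t^{(e)}$ lies outside $\pi_J\BoolBCK$ and distinctness of the images survives. This is not quite monotone in the right direction, though. A map into a larger projection restricts to a map into a smaller one only if the smaller one still separates $t^{(e)}$ from the other five tuples, and that is exactly our hypothesis on $J$. So it suffices to handle the \emph{minimal} $J$ with $\pi_J t^{(e)} \notin \pi_J\BoolBCK$, and then restrict.

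This is the wrong way round. Restricting from $J'$ to $J$ requires a map for the larger $J'$, so I actually need the \emph{maximal} proper $J$: the $9$ sets $[9]\setminus\{k\}$. Every admissible $J$ lies in one of them, and restriction works for any admissible $J$. Using the symmetry of $\BoolBCKp$ under row and column permutations and transpose, which fix $t^{(e)}$ and act on $[9]$, the $9$ cases collapse to essentially two: deleting a diagonal entry or deleting an off-diagonal entry.

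\textbf{Building the maps.} For each representative $J = [9]\setminus\{k\}$ I will give an explicit $\Sigma : S_2 \to \pi_J\BoolBCKp$ satisfying Proposition~\ref{prop:projection-compact}. The domain $S_2 = C_6 \times C_6$ has $36$ tuples. Its unique non-$R_2$ tuple $(0,0,0,0)$ must go to $\pi_J t^{(e)}$, and everything else must go into $\pi_J\BoolBCK$. Each of the eight output coordinates must depend on at most $3$ of the $4$ input coordinates, and on proper subsets in general.

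I expect to find these maps with a SAT solver, as in Appendix~\ref{app:sat}, encoding the dependence condition (2). The resulting tables then go into the proof, to be verified by inspection or mechanically. The main obstacle is existence itself: an $\mathcal I$-substructure with sets of size $\le 3$ into these $8$-ary Boolean predicates is not clear a priori. The structure to exploit is that $C_6$ is a $6$-cycle, so the neighbours of the ``forbidden'' vertex $(0,0)$ resemble the transposition structure around $t^{(e)}$. If a size-$3$ family fails for some $J$, I would fall back to a different bundling of coordinates before giving up.
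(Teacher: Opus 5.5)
Your plan is the paper's proof. You reduce to the nine maximal sets $J=[9]\setminus\{k\}$ and collapse them, via the symmetries of $\BoolBCKp$ that fix $t^{(e)}$ (simultaneous row/column permutation and transpose), to a diagonal and an off-diagonal case, which are the paper's $J_1$ and $J_2$. You then exhibit SAT-found maps $\Sigma : S_2 \to \pi_J\BoolBCKp$ with all $|I_j| = 3$ and conclude via Lemma~\ref{lem:R2S2} and Theorem~\ref{thm:shrink-lb} with $\eps = 1/6$; restricting $\Sigma$ to smaller $J$ is an equivalent substitute for the paper's projection of instances. The existence step you flag as the main obstacle is settled affirmatively by the explicit Z3 tables in Lemma~\ref{lem:J1} and Lemma~\ref{lem:J2}, with $I_j = [4]\setminus\{i_j\}$, so no fallback bundling is needed---but your argument is a plan rather than a proof until such certificates are exhibited.
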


\begin{proof}
For any $I \subseteq J \subseteq [9]$, assuming $\pi_I\BoolBCKp \neq \pi_I\BoolBCK$, we can see that $\NRD(\pi_I\BoolBCK \mid \pi_I\BoolBCKp, n) \ge \Omega(\NRD(\pi_J\BoolBCK \mid \pi_J\BoolBCKp, n))$ by a simple gadget reduction: if $H$ is an $r$-partite non-redundant instance of $\CSP(\pi_J\BoolBCK \mid \pi_J\BoolBCKp)$, then $\pi_I H$ is a non-redundant instance of $\CSP(\pi_I\BoolBCK \mid \pi_I\BoolBCKp)$.

Thus, it suffices to prove (\ref{eq:BoolBCK-proj-lb}) in the case that $|J| = 8$. By applying a suitable (common) permutation to the coordinates of $\BoolBCK$ and $\BoolBCKp$, it suffices to actually prove (\ref{eq:BoolBCK-proj-lb}) in two cases: $J_1 := [9] \setminus \{1\}$ and $J_2 := [9] \setminus \{2\}$, see Lemma~\ref{lem:J-sym}.

By Lemma~\ref{lem:J1} and Lemma~\ref{lem:J2}, we have that $(C_6^* \mid C_6) \boxtimes (C_6^* \mid C_6)$ is a suitable substructure of both $(\pi_{J_1}\BoolBCKp \mid \pi_{J_1}\BoolBCK)$ and $(\pi_{J_2}\BoolBCKp \mid \pi_{J_2}\BoolBCK)$.  By applying Lemma~\ref{lem:R2S2} to Theorem~\ref{thm:shrink-lb} with the parameter choice of $\eps = 1/6$, we thus get our $\Omega(n^{6/5})$ lower bound for both predicates, as desired.
\end{proof}

We now turn to proving the symmetry-breaking Lemma~\ref{lem:J-sym}.

\begin{lemma}\label{lem:J-sym}
For every $J \subseteq [9]$ of size $8$, either
\begin{align*}
\NRD(\pi_J\BoolBCK \mid \pi_J\BoolBCKp, n) &= \NRD(\pi_{J_1}\BoolBCK \mid \pi_{J_1}\BoolBCKp, n), \text{or}\\
\NRD(\pi_J\BoolBCK \mid \pi_J\BoolBCKp, n) &= \NRD(\pi_{J_2}\BoolBCK \mid \pi_{J_2}\BoolBCKp, n),
\end{align*}
where $J_1 = [9] \setminus \{1\}$ and $J_2 = [9] \setminus \{2\}$.
\end{lemma}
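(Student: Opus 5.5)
We will exploit the symmetry group of the pair $(\BoolBCK, \BoolBCKp)$. View coordinates of $[9]$ as cells $(a,b)$ of a $3\times 3$ matrix in row-major order, so coordinate $3(a-1)+b$ is cell $(a,b)$. The first observation is that NRD is invariant under a common permutation of coordinates. If $\sigma$ is a permutation of $[9]$ with $\sigma(\BoolBCK)=\BoolBCK$ and $\sigma(\BoolBCKp)=\BoolBCKp$, then permuting the parts of an $8$-partite instance shows that
\[
\NRD(\pi_J\BoolBCK \mid \pi_J\BoolBCKp, n) = \NRD(\pi_{\sigma(J)}\BoolBCK \mid \pi_{\sigma(J)}\BoolBCKp, n).
\]
More precisely, reordering the coordinates of every constraint tuple preserves which assignments satisfy which constraints. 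This is immediate from Definition~\ref{def:conditional-nrd}, since the domain and the set of assignments are untouched and only the order of entries in each tuple changes.

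Next we identify enough symmetries. $\BoolBCKp$ is the set of all permutation matrices and $\BoolBCK$ is the set of those other than the identity. Consider the maps $M \mapsto P_\tau M P_\tau^{-1}$ for $\tau \in S_3$, i.e. simultaneously permuting rows and columns by the same $\tau$, acting on cells by $(a,b)\mapsto(\tau(a),\tau(b))$. These preserve the set of permutation matrices and fix the identity, so they preserve both predicates. Transposition $(a,b)\mapsto(b,a)$ also preserves both predicates, but it is not needed.

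It remains to classify $J=[9]\setminus\{k\}$ by the removed cell $k$, since $\sigma(J)=[9]\setminus\{\sigma(k)\}$. The action $(a,b)\mapsto(\tau(a),\tau(b))$ of $S_3$ on the $9$ cells has exactly two orbits: the diagonal cells $\{(a,a)\}$, containing coordinate $1=(1,1)$, and the off-diagonal cells $\{(a,b):a\neq b\}$, on which $S_3$ acts transitively since it is $2$-transitive, containing coordinate $2=(1,2)$. So every $J$ of size $8$ maps to $J_1$ or $J_2$ under some symmetry, and the lemma follows.

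No step is a real obstacle. The only care needed is to verify formally that a common coordinate permutation preserving both $P$ and $Q$ leaves the conditional NRD unchanged. This means checking that $\pi_{\sigma(J)}$ of the permuted predicate equals the coordinate-reindexed $\pi_J$ of the original, so that instances transform bijectively with the same edge count and vertex count.
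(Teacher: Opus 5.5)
Your proof is correct and follows the paper's approach: conditional NRD is invariant under a common reordering of coordinates, and this is applied to permutations of $[9]$ that preserve both $\BoolBCK$ and $\BoolBCKp$. The only difference is that you derive these permutations from simultaneous row/column permutation (conjugation by $S_3$) and its two orbits on cells, $\{1,5,9\}$ and $\{2,3,4,6,7,8\}$, whereas the paper lists explicit bijections $\sigma_i$ in a table and checks them by inspection.
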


\begin{proof}
Given a conditional predicate $P \subsetneq Q \subseteq D^r$ and a permutation $\sigma : [r] \to [r]$, we can define a permuted instance $P^{\sigma} \subsetneq Q^{\sigma} \subseteq D^r$ by
\begin{align*}
  P^{\sigma} &= \{(x_{\sigma(1)}, \hdots, x_{\sigma(r)}) : x \in P\},\\
  Q^{\sigma} &= \{(x_{\sigma(1)}, \hdots, x_{\sigma(r)}) : x \in Q\}.
\end{align*}

It is straightforward to verify that $\NRD(P^{\sigma} \mid Q^{\sigma}, n) = \NRD(P \mid Q, n)$ as any non-redundant instance $H = (V, E)$ of $P \mid Q$ can be transformed into a non-redundant instance of $P^{\sigma} \mid Q^{\sigma}$ by apply the permutation $\sigma$ to each constraint $e \in E$. Furthermore, since $\sigma$ is invertible, this is a bijection between non-redundant instances.

To prove the lemma, we apply this fact to the various projections of $\BoolBCK \mid \BoolBCKp$. More precisely, for $i \in [9]$, let $J_i := [9] \setminus \{i\}$. For $i \in \{5,9\}$, we construct bijections $\sigma_i : J_i \to J_1$ such that
\[
  (\pi_{J_1}\BoolBCK)^{\sigma_i} \mid (\pi_{J_1}\BoolBCKp)^{\sigma_i} = \pi_{J_i}\BoolBCK \mid \pi_{J_i}\BoolBCKp.
\]
Likewise, for $i \in \{3,4,6,7,8\}$, we construct bijections $\sigma_i : J_i \to J_2$ such that
\[
(\pi_{J_2}\BoolBCK)^{\sigma_i} \mid (\pi_{J_2}\BoolBCKp)^{\sigma_i} = \pi_{J_i}\BoolBCK \mid \pi_{J_i}\BoolBCKp.
\]
We summarize these maps in the following table, with $\bot$ indicating the given input is not in the domain.

\begin{center}
\begin{tabular}{c|ccccccccc}
$j$ & 1 & 2 & 3 & 4 & 5 & 6 & 7 & 8 & 9\\\hline
$\sigma_3(j)$ & 1 & 3 & $\bot$ & 7 & 9 & 8 & 4 & 6 & 5\\
$\sigma_4(j)$ & 1 & 4 & 7 & $\bot$ & 5 & 8 & 3 & 6 & 9\\
$\sigma_5(j)$ & 5 & 2 & 8 & 4 & $\bot$ & 7 & 6 & 3 & 9\\
$\sigma_6(j)$ & 9 & 6 & 3 & 8 & 5 & $\bot$ & 8 & 4 & 1\\
$\sigma_7(j)$ & 1 & 7 & 4 & 3 & 9 & 6 & $\bot$ & 8 & 5\\
$\sigma_8(j)$ & 9 & 3 & 6 & 7 & 1 & 4 & 8 & $\bot$ & 5\\
$\sigma_9(j)$ & 5 & 6 & 4 & 8 & 9 & 7 & 2 & 3 & $\bot$
\end{tabular}
\end{center}
For example, using $\sigma_3$, one can see that
\begin{align*}
(\pi_{J_2} t^{(e)})^{\sigma_3} &= \pi_{J_1} t^{(e)} & (\pi_{J_2} t^{(12)})^{\sigma_3} &= \pi_{J_1} t^{(13)}\\
(\pi_{J_2} t^{(13)})^{\sigma_3} &= \pi_{J_1} t^{(12)} & (\pi_{J_2} t^{(23)})^{\sigma_3} &= \pi_{J_1} t^{(23)}\\
(\pi_{J_2} t^{(123)})^{\sigma_3} &= \pi_{J_1} t^{(321)} & (\pi_{J_2} t^{(321)})^{\sigma_3} &= \pi_{J_1} t^{(123)},
\end{align*}
which certifies that $(\pi_{J_2}\BoolBCK)^{\sigma_3} \mid (\pi_{J_2}\BoolBCKp)^{\sigma_3} = \pi_{J_3}\BoolBCK \mid \pi_{J_3}\BoolBCKp.$ The validity of these remaining bijections can be checked by a similar argument.
\end{proof}

With the symmetry breaking established, it now suffices to prove Lemma~\ref{lem:J1} and Lemma~\ref{lem:J2}.

\begin{lemma}\label{lem:J1}
There exists $\mathcal I = \{I_2, I_3, \hdots, I_9\}$ each of size $3$ such that $(C_6^* \mid C_6) \boxtimes (C_6^* \mid C_6)$ is an $\mathcal I$-substructure of $(\pi_{J_1}\BoolBCK \mid \pi_{J_1}\BoolBCKp)$.
\end{lemma}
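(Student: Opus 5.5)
The plan is to apply Proposition~\ref{prop:projection-compact} directly. Write $R_2 \mid S_2 = (C_6^* \mid C_6)\boxtimes(C_6^*\mid C_6)$. Then $S_2 = C_6\times C_6$ is a set of $36$ tuples in $\{0,1,2\}^4$. By Definition~\ref{def:box-product}, $R_2 = (C_6^*\times C_6)\cup(C_6\times C_6^*)$, so $S_2\setminus R_2=\{(0,0,0,0)\}$. The target pair is $\pi_{J_1}\BoolBCK \subsetneq \pi_{J_1}\BoolBCKp$, the $3\times 3$ permutation matrices with entry $(1,1)$ deleted. Projecting away entry $(1,1)$ loses no information, because each permutation matrix is determined by its other eight entries. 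Hence $\pi_{J_1}\BoolBCK \mid \pi_{J_1}\BoolBCKp$ still has exactly one ``bad'' tuple, namely $\pi_{J_1}t^{(e)}$.

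So it suffices to find a map $\Sigma : C_6\times C_6 \to S_3$, identifying each permutation with its matrix, with two properties:
\begin{itemize}[label={}]
\item[(a)] $\Sigma(0,0,0,0)=e$ and $\Sigma(x)\neq e$ for every other $x\in C_6\times C_6$;
\item[(b)] for each of the eight entries $(a,b)\neq(1,1)$ there is an index $k_{(a,b)}\in[4]$ such that the indicator $[\Sigma(x)(a)=b]$ does not change when $x$ changes only in coordinate $k_{(a,b)}$, within $C_6\times C_6$.
\end{itemize}
Setting $I_{(a,b)} := [4]\setminus\{k_{(a,b)}\}$ then gives the required family $\mathcal I$ of $3$-sets. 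Condition (1) of Proposition~\ref{prop:projection-compact} is exactly (a), and condition (2) is exactly (b).

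To find $\Sigma$, first fix a candidate assignment of ``forgotten'' coordinates $k_{(a,b)}$. A natural choice spreads them so that each row and each column of the matrix has entries forgetting different coordinates among $\{1,2\}$ and $\{3,4\}$. Then solve for $\Sigma$ as a constraint problem: $36$ variables with values in $S_3$, the equality constraints from (b) along each edge of the ``Hamming graph'' of $C_6\times C_6$, and the constraint (a). I would encode this in SAT as in Appendix~\ref{app:sat}, enumerating the at most $4^8$ choices of $k$ (fewer after symmetry) until a satisfiable one is found.

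For intuition, and to guide the choice of $k$, view each factor $C_6$ as a $6$-cycle on the six pairs $(u,v)$. Changing one coordinate moves along the cycle. The identity must be isolated at the single point $(0,0)\times(0,0)$, while non-identity permutations must be ``locally constant'' in the forgotten directions. The difficulty is that the entries of row $1$ and column $1$ other than $(1,1)$ must jointly detect $\Sigma(x)(1)\ne 1$ using only partial information.

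The main obstacle is this existence step: showing that some choice of $k$ admits a consistent $\Sigma$. I expect it to be settled by the computer search. The final proof would then present $\Sigma$ as an explicit $6\times 6$ table indexed by the two $C_6$-factors, together with the list of $I_j$. Verification is routine: check that the identity appears only at $(0,0,0,0)$, and for each $j$, check that the $j$-th bit is invariant along the forgotten coordinate.
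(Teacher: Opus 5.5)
Your setup matches the paper's exactly. You apply Proposition~\ref{prop:projection-compact}, and your two observations are right: $S_2\setminus R_2=\{(0,0,0,0)\}$, and $\pi_{J_1}$ is injective on $\BoolBCKp$, so $\pi_{J_1}t^{(e)}$ is the only bad tuple. Together these correctly reduce the lemma to finding $\Sigma$ satisfying your (a) and (b), with $I_j=[4]\setminus\{k_j\}$. The gap is that you stop there.

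The lemma is a pure existence statement, and its entire content is that some choice of the eight forgotten coordinates admits a consistent $\Sigma$. Your proposal only describes a search and says you expect it to succeed. That expectation is not automatic. The paper reports that no suitable substructure was found for the unprojected target $\BoolBCK\mid\BoolBCKp$. A success there with $3$-sets would, via Lemma~\ref{lem:R2S2} and Theorem~\ref{thm:shrink-lb}, settle the open problem for $\BoolBCK$. So the fact that deleting coordinate $1$ makes the constraint system satisfiable is precisely what must be demonstrated. As written, you have a correct plan, not a proof.

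The paper closes this step with an explicit certificate found by Z3. The forgotten coordinates are $(i_2,\dots,i_9)=(1,2,3,1,2,4,1,2)$:
\begin{itemize}
\item entries $(1,2),(2,2),(3,2)$ ignore input coordinate $1$;
\item entries $(1,3),(2,3),(3,3)$ ignore coordinate $2$;
\item entries $(2,1),(3,1)$ ignore coordinates $3$ and $4$ respectively.
\end{itemize}
It also lists the full $36$-entry table of $\Sigma$, after which verification is the finite inspection you describe. This choice is organized by columns and does not fit the ``spread out'' pattern your heuristic proposes. So do not use that heuristic to prune the search; enumerate the choices of $k$. To turn your proposal into a proof, you must actually include such a $\Sigma$ and the $I_j$, or give a non-computational existence argument.
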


\begin{proof}
We use the notation $R_2 \mid S_2$ for $(C_6^* \mid C_6) \boxtimes (C_6^* \mid C_6)$. By Proposition~\ref{prop:projection-compact}, it suffices to construct a map $\Sigma : S_2 \to \pi_{J_1} \BoolBCKp$ such that the following conditions hold:
\begin{itemize}
\item $\Sigma(R_2) \subseteq  \pi_{J_1} \BoolBCK$
\item $\Sigma(S_2 \setminus R_2) \subseteq \pi_{J_1} \BoolBCK \setminus \pi_{J_1} \BoolBCK$
\item Every output coordinate of $\Sigma$ depends only on three input coordinates.
\end{itemize}
Using the Satisfiability Modulo Theories (SMT) solver Z3~\cite{Z3}, we construct such a $\Sigma$ as follows, which can be checked by inspection.
\begin{align*}
\Sigma(0, 0, 0, 0) &= (0, 0, 0, 1, 0, 0, 0, 1) & 
\Sigma(0, 1, 0, 0) &= (1, 0, 1, 0, 0, 0, 0, 1)\\
\Sigma(1, 0, 0, 0) &= (0, 1, 0, 1, 0, 1, 0, 0) & 
\Sigma(1, 2, 0, 0) &= (0, 1, 1, 0, 0, 0, 1, 0)\\
\Sigma(2, 1, 0, 0) &= (1, 0, 0, 0, 1, 1, 0, 0) & 
\Sigma(2, 2, 0, 0) &= (0, 0, 0, 0, 1, 0, 1, 0)\\
\Sigma(0, 0, 0, 1) &= (1, 0, 1, 0, 0, 0, 0, 1) & 
\Sigma(0, 1, 0, 1) &= (1, 0, 1, 0, 0, 0, 0, 1)\\
\Sigma(1, 0, 0, 1) &= (1, 0, 0, 0, 1, 1, 0, 0) & 
\Sigma(1, 2, 0, 1) &= (0, 0, 0, 0, 1, 0, 1, 0)\\
\Sigma(2, 1, 0, 1) &= (1, 0, 0, 0, 1, 1, 0, 0) & 
\Sigma(2, 2, 0, 1) &= (0, 0, 0, 0, 1, 0, 1, 0)\\
\Sigma(0, 0, 1, 0) &= (0, 1, 0, 1, 0, 1, 0, 0) & 
\Sigma(0, 1, 1, 0) &= (0, 1, 1, 0, 0, 0, 1, 0)\\
\Sigma(1, 0, 1, 0) &= (0, 1, 0, 1, 0, 1, 0, 0) & 
\Sigma(1, 2, 1, 0) &= (0, 1, 1, 0, 0, 0, 1, 0)\\
\Sigma(2, 1, 1, 0) &= (0, 0, 0, 0, 1, 0, 1, 0) & 
\Sigma(2, 2, 1, 0) &= (0, 0, 0, 0, 1, 0, 1, 0)\\
\Sigma(0, 0, 1, 2) &= (1, 0, 0, 0, 1, 1, 0, 0) & 
\Sigma(0, 1, 1, 2) &= (0, 0, 0, 0, 1, 0, 1, 0)\\
\Sigma(1, 0, 1, 2) &= (1, 0, 0, 0, 1, 1, 0, 0) & 
\Sigma(1, 2, 1, 2) &= (0, 0, 0, 0, 1, 0, 1, 0)\\
\Sigma(2, 1, 1, 2) &= (0, 1, 1, 0, 0, 0, 1, 0) & 
\Sigma(2, 2, 1, 2) &= (0, 1, 1, 0, 0, 0, 1, 0)\\
\Sigma(0, 0, 2, 1) &= (0, 1, 1, 0, 0, 0, 1, 0) & 
\Sigma(0, 1, 2, 1) &= (0, 1, 1, 0, 0, 0, 1, 0)\\
\Sigma(1, 0, 2, 1) &= (0, 0, 0, 0, 1, 0, 1, 0) & 
\Sigma(1, 2, 2, 1) &= (0, 0, 0, 0, 1, 0, 1, 0)\\
\Sigma(2, 1, 2, 1) &= (0, 0, 0, 0, 1, 0, 1, 0) & 
\Sigma(2, 2, 2, 1) &= (0, 0, 0, 0, 1, 0, 1, 0)\\
\Sigma(0, 0, 2, 2) &= (0, 0, 0, 0, 1, 0, 1, 0) & 
\Sigma(0, 1, 2, 2) &= (0, 0, 0, 0, 1, 0, 1, 0)\\
\Sigma(1, 0, 2, 2) &= (0, 0, 0, 0, 1, 0, 1, 0) & 
\Sigma(1, 2, 2, 2) &= (0, 0, 0, 0, 1, 0, 1, 0)\\
\Sigma(2, 1, 2, 2) &= (0, 1, 1, 0, 0, 0, 1, 0) & 
\Sigma(2, 2, 2, 2) &= (0, 1, 1, 0, 0, 0, 1, 0)
\end{align*}
In particular, if we let $(i_2, \hdots, i_9) = (1, 2, 3, 1, 2, 4, 1, 2)$, then the $j$th coordinate of the output does not depend on the $i_j$th coordinate of the input (that is, $I_j = [4] \setminus \{i_j\}$).
\end{proof}

\begin{lemma}\label{lem:J2}
There exists $\mathcal I = \{I_1, I_3, I_4, \hdots, I_9\}$ each of size $3$ such that $(C_6^* \mid C_6) \boxtimes (C_6^* \mid C_6)$ is an $\mathcal I$-substructure of $(\pi_{J_2}\BoolBCK \mid \pi_{J_2}\BoolBCKp)$.
\end{lemma}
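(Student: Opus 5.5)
The plan mirrors the proof of Lemma~\ref{lem:J1}. Write $R_2 \mid S_2$ for $(C_6^* \mid C_6) \boxtimes (C_6^* \mid C_6)$. By Proposition~\ref{prop:projection-compact}, it suffices to exhibit a map $\Sigma : S_2 \to \pi_{J_2}\BoolBCKp$ with three properties.
\begin{itemize}
\item $\Sigma(R_2) \subseteq \pi_{J_2}\BoolBCK$.
\item $\Sigma(S_2 \setminus R_2) = \Sigma(\{(0,0,0,0)\}) \subseteq \pi_{J_2}\BoolBCKp \setminus \pi_{J_2}\BoolBCK = \{\pi_{J_2} t^{(e)}\}$.
\item For each output coordinate $j \in J_2$ there is an index $i_j \in [4]$ such that $\Sigma(x)_j = \Sigma(y)_j$ whenever $x,y \in S_2$ differ only in coordinate $i_j$. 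We then set $I_j = [4] \setminus \{i_j\}$.
\end{itemize}
Since $S_2 = C_6 \times C_6$ has $36$ elements and $R_2$ consists of all of them except $(0,0,0,0)$, the map $\Sigma$ is a finite object. We are forced to set $\Sigma(0,0,0,0) = \pi_{J_2} t^{(e)} = (1,0,0,0,1,0,0,1)$, where the coordinates are indexed by $1,3,4,\dots,9$. The remaining $35$ inputs must be sent to the five projected tuples $\pi_{J_2} t^{(12)}, \dots, \pi_{J_2} t^{(321)}$. We first check that these five projections remain distinct from $\pi_{J_2}t^{(e)}$, so the condition is meaningful.

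To find $\Sigma$, I would encode the search as a SAT/SMT instance, exactly as for Lemma~\ref{lem:J1} and Appendix~\ref{app:sat}. There are two families of variables: the choice of image for each of the $35$ tuples of $R_2$ among the five allowed targets, and the choice of $i_j \in [4]$ for each of the $8$ output coordinates. The constraint for coordinate $j$ is that, along every ``line'' in coordinate $i_j$ (pairs of tuples of $C_6 \times C_6$ that agree outside coordinate $i_j$), the $j$th output bit is constant. We then solve with Z3 and present the resulting table of $36$ values, together with the vector $(i_1, i_3, \dots, i_9)$, so that the reader can verify it by inspection.

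For intuition, and as a fallback if a hand construction is wanted, I would try to adapt the Lemma~\ref{lem:J1} solution via the symmetries of the problem. The automorphisms of $C_6$ fixing $(0,0)$, the swap of the two factors, and the row/column permutations of $3\times 3$ permutation matrices fixing the identity all act here. However, deleting coordinate $2$ (an off-diagonal entry) instead of coordinate $1$ (a diagonal entry) breaks the correspondence used in Lemma~\ref{lem:J-sym}, so a direct transfer is unlikely to work.

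The main obstacle is existence. One must ensure the dependency constraints are satisfiable with all $|I_j| = 3$. The key tension is at $(0,0,0,0)$. Every neighbour of it along a single coordinate (for example $(1,0,0,0)$ or $(0,1,0,0)$) must map to a non-identity matrix. Yet for each output coordinate $j$, the neighbour along the ignored coordinate $i_j$ must agree with $t^{(e)}$ in position $j$. So each non-identity target must match the identity's bits on the positions whose ignored coordinate is the relevant direction. I would check this local condition first by hand, to choose plausible values of $i_j$. Then I would let the solver complete the assignment globally, and finally verify the output table.
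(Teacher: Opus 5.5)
Your reduction matches the paper's. By Proposition~\ref{prop:projection-compact}, the lemma is equivalent to the existence of a map $\Sigma : C_6 \times C_6 \to \pi_{J_2}\BoolBCKp$ that sends $(0,0,0,0)$ to $\pi_{J_2}t^{(e)}$ and everything else into $\pi_{J_2}\BoolBCK$. Each output coordinate $j$ must also be constant along lines in some input coordinate $i_j$. The paper, like you, finds this map with Z3.

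\textbf{The gap: no certificate.} Your write-up stops at the search. The lemma is a pure existence claim about a finite object, so its entire content is the certificate. You exhibit neither $\Sigma$ nor the indices $i_j$, and you explicitly leave existence as ``the main obstacle.'' Your local analysis at $(0,0,0,0)$ is only a necessary condition. It says nothing about whether the line constraints over all $36$ inputs are globally consistent.

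The paper closes this gap with the full table and the index vector $(i_1,i_3,i_4,\dots,i_9) = (1,2,1,3,2,1,4,2)$. For instance, the neighbours of $(0,0,0,0)$ along input coordinates $1,2,3,4$ are sent to $\pi_{J_2}t^{(23)}$, $\pi_{J_2}t^{(12)}$, $\pi_{J_2}t^{(13)}$, $\pi_{J_2}t^{(12)}$ respectively. These agree with $t^{(e)}$ on output coordinates $\{1,4,7\}$, $\{3,6,9\}$, $\{5\}$, $\{8\}$ respectively, exactly as your local condition demands. Until such a table is produced and checked, what you have is a plan for finding a proof, not a proof.

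\textbf{A concrete slip.} Read in coordinate order $1,3,4,\dots,9$, $\pi_{J_2}t^{(e)} = (1,0,0,1,0,0,0,1)$, not $(1,0,0,0,1,0,0,1)$. Your vector is not even an element of $\pi_{J_2}\BoolBCKp$. If you hard-code that value for $\Sigma(0,0,0,0)$ into the solver, or use it in your hand check of the local condition, you will be imposing the wrong constraints on output coordinates $5$ and $6$.
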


\begin{proof}
By Proposition~\ref{prop:projection-compact}, it suffices to construct a map $\Sigma : S_2 \to \pi_{J_2} \BoolBCKp$ such that the following conditions hold:
\begin{itemize}
\item $\Sigma(R_2) \subseteq  \pi_{J_2} \BoolBCK$
\item $\Sigma(S_2 \setminus R_2) \subseteq \pi_{J_2} \BoolBCK \setminus \pi_{J_2} \BoolBCK$
\item Every output coordinate of $\Sigma$ depends only on three input coordinates.
\end{itemize}Using the Z3 SMT solver~\cite{Z3}, we construct such a $\Sigma$ as follows, which can be checked by inspection.
\begin{align*}
\Sigma(0, 0, 0, 0) &= (1, 0, 0, 1, 0, 0, 0, 1) & 
\Sigma(0, 1, 0, 0) &= (0, 0, 1, 0, 0, 0, 0, 1)\\
\Sigma(1, 0, 0, 0) &= (1, 0, 0, 0, 1, 0, 1, 0) & 
\Sigma(1, 2, 0, 0) &= (0, 0, 0, 0, 1, 1, 0, 0)\\
\Sigma(2, 1, 0, 0) &= (0, 1, 1, 0, 0, 0, 1, 0) & 
\Sigma(2, 2, 0, 0) &= (0, 1, 0, 1, 0, 1, 0, 0)\\
\Sigma(0, 0, 0, 1) &= (0, 0, 1, 0, 0, 0, 0, 1) & 
\Sigma(0, 1, 0, 1) &= (0, 0, 1, 0, 0, 0, 0, 1)\\
\Sigma(1, 0, 0, 1) &= (0, 1, 1, 0, 0, 0, 1, 0) & 
\Sigma(1, 2, 0, 1) &= (0, 1, 0, 1, 0, 1, 0, 0)\\
\Sigma(2, 1, 0, 1) &= (0, 1, 1, 0, 0, 0, 1, 0) & 
\Sigma(2, 2, 0, 1) &= (0, 1, 0, 1, 0, 1, 0, 0)\\
\Sigma(0, 0, 1, 0) &= (0, 1, 0, 1, 0, 1, 0, 0) & 
\Sigma(0, 1, 1, 0) &= (0, 1, 1, 0, 0, 0, 1, 0)\\
\Sigma(1, 0, 1, 0) &= (0, 0, 0, 0, 1, 1, 0, 0) & 
\Sigma(1, 2, 1, 0) &= (0, 0, 0, 0, 1, 1, 0, 0)\\
\Sigma(2, 1, 1, 0) &= (0, 1, 1, 0, 0, 0, 1, 0) & 
\Sigma(2, 2, 1, 0) &= (0, 1, 0, 1, 0, 1, 0, 0)\\
\Sigma(0, 0, 1, 2) &= (0, 0, 0, 0, 1, 1, 0, 0) & 
\Sigma(0, 1, 1, 2) &= (1, 0, 0, 0, 1, 0, 1, 0)\\
\Sigma(1, 0, 1, 2) &= (0, 0, 0, 0, 1, 1, 0, 0) & 
\Sigma(1, 2, 1, 2) &= (0, 0, 0, 0, 1, 1, 0, 0)\\
\Sigma(2, 1, 1, 2) &= (1, 0, 0, 0, 1, 0, 1, 0) & 
\Sigma(2, 2, 1, 2) &= (0, 0, 0, 0, 1, 1, 0, 0)\\
\Sigma(0, 0, 2, 1) &= (0, 1, 1, 0, 0, 0, 1, 0) & 
\Sigma(0, 1, 2, 1) &= (0, 1, 1, 0, 0, 0, 1, 0)\\
\Sigma(1, 0, 2, 1) &= (0, 1, 1, 0, 0, 0, 1, 0) & 
\Sigma(1, 2, 2, 1) &= (0, 1, 0, 1, 0, 1, 0, 0)\\
\Sigma(2, 1, 2, 1) &= (0, 1, 1, 0, 0, 0, 1, 0) & 
\Sigma(2, 2, 2, 1) &= (0, 1, 0, 1, 0, 1, 0, 0)\\
\Sigma(0, 0, 2, 2) &= (1, 0, 0, 0, 1, 0, 1, 0) & 
\Sigma(0, 1, 2, 2) &= (1, 0, 0, 0, 1, 0, 1, 0)\\
\Sigma(1, 0, 2, 2) &= (1, 0, 0, 0, 1, 0, 1, 0) & 
\Sigma(1, 2, 2, 2) &= (0, 0, 0, 0, 1, 1, 0, 0)\\
\Sigma(2, 1, 2, 2) &= (1, 0, 0, 0, 1, 0, 1, 0) & 
\Sigma(2, 2, 2, 2) &= (0, 0, 0, 0, 1, 1, 0, 0)
\end{align*}
In particular, if we let $(i_1, i_3, i_4, \hdots, i_9) = (1,2,1,3,2,1,4,2)$, then the $j$th coordinate of the output does not depend on the $i_j$th coordinate of the input (that is, $I_j = [4] \setminus \{i_j\}$).
\end{proof}

\section{Conclusion and Open Questions}\label{sec:conclusion}

In this paper, we use a hypergraph projection framework to study the relationship between the non-redundancy of various CSP predicates. By studying what we call \emph{shrinking instances}, we can obtain sharper lower bounds on the non-redundancy of many predicates. The progress we made in this paper serves as the next step toward conquering the biggest open questions concerning the nature of non-redundancy, especially the classification of which predicates have (near-)linear non-redundancy. The use of automated reasoning  highlights the sophistication of the gadget reductions we consider as well the possibility of even more elaborate reductions in the future.

The primary open question we leave is whether super-linear lower bounds exist for $\BoolBCK$. Given how our methods can yield such lower bounds for every nontrivial projection of $\BoolBCK \mid \BoolBCKp$, the primary barrier at this point may be to identify a suitable (conditional) predicate with well-behaved shrinking instances. Given that we found that every nontrivial projection of $\BoolBCK \mid \BoolBCKp$ has $C_6 \boxtimes C_6$ as a substructure, we can infer that the construction of high-girth graphs can yield versatile shrinking instances which can be beneficial in complex gadget reductions. However, even when we tried longer cycles, we were not able to construct a suitable substructure of $\BoolBCK \mid \BoolBCKp$ itself. One potential route to such a substructure would be to consider high-girth \emph{hypergraph} problems rather than graph problems. We leave such investigations as the subject of future work.

\subsection*{Acknowlegments}

JB and VG are supported in part by a Simons Investigator award and NSF grant CCF-2211972. JB is also supported by NSF grant DMS-2503280. BJ is supported by the Dutch Research Council (NWO) through Gravitation-grant NETWORKS-024.002.003.

\bibliographystyle{alphaurl} 
\bibliography{catalan-cousins}

\appendix

\section{Finding $\mathcal I$-substructures using a SAT-solver}\label{app:sat}

Given predicates $P_1 \subsetneq Q_1 \subseteq D_1^{r_1}$ and $P_2 \subsetneq Q_2 \subseteq D_2^{r_2}$ as well as a family $\mathcal I = (I_1, \hdots, I_{r_2})$ of subsets of $[r_1]$ a natural question is how one can verify whether $P_1 \mid Q_1$ is a $\mathcal I$-substructure of $P_2 \mid Q_2$. In this appendix, we briefly discuss how this can be encoded into SAT.

From Proposition~\ref{prop:projection-compact}, it suffices to construct a map $\Sigma : Q_1 \to Q_2$ satisfying the following conditions.
\begin{itemize}
\item[(1)] $\Sigma(P_1) \subseteq P_2$ and $\Sigma(Q_1 \setminus P_1) \subseteq Q_2 \setminus P_2$.
\item[(2)] For all $i \in [r_1]$ and $j \in [r_2]$, if $i \not\in I_j$ and $q, q' \in Q_1$ differ only in coordinate $i$, then $\Sigma(q)_j = \Sigma(q')_j$\end{itemize}

For each $q \in Q_1$, $i \in [r_2]$, and $d \in D_2$, we have a variable $x_{q,i,d}$ which represents whether $\Sigma_i(q) = d$. From this, condition (2) can be quickly encoded as equality conditions between some of the variables. More precisely $x_{q,j,d} = x_{q',j,d}$ (aka $(x_{q,j,d} \vee \bar{x}_{q',j,d}) \wedge (\bar{x}_{q,j,d} \vee x_{q',j,d})$) whether $q$ and $q'$ differ only in the $i$th coordinate and $i \not\in I_j$.

We enforce (1) using auxiliary variables. For every $q \in Q_1$ and $q' \in Q_2$, we have a variable $y_{q,q'}$ representing whether $\sigma(q) = q'$. For each such $y_{q,q'}$ we construct a series of implications connecting them to the $x_{q,i,d}$'s.
\[
\forall i \in [r_2], d \in D_2, y_{q,q'} \implies x_{q,i,d} \text{ if } d = q'_i \text{ otherwise } y_{q,q'}\implies \bar{x}_{q,i,d}.
\]
To enforce that $\Sigma(P_1) \subseteq P_2$ and $\Sigma(Q_1 \setminus P_1) \subseteq Q_2 \setminus P_2$, we force $\bar{y_{q,q'}}$ if $q \in P_1$ but $q' \not\in P_1$ or vice-versa. Finally, we need for each $q \in Q_1$ that $y_{q,q'}$ is true for some $q' \in Q_2$ which can be done by a suitable CNF: $\bigvee_{q' \in Q_2} y_{q,q'}.$ This completes a valid SAT-encoding of the $\mathcal I$-substructure condition.

\begin{remark}
This approach is not used directly in this paper. Rather, using the Satisfiability Modulo Theories (SMT) solver Z3~\cite{Z3}, we write out the conditions of being a $\mathcal I$-substructure in a higher level of abstraction (i.e., $\Sigma_1$-sentences involving integer variables) which Z3 then can find a solution for using a variety of techniques including SAT solving.
\end{remark}

\section{Conditional Non-redundancy Can Be Made Non-conditional}\label{app:conditional}

In \cite[Lemma 4.9]{BGJLW25} and \cite[Proposition 2.4]{BGP26}, the authors show for many conditional predicates $P \subsetneq Q \subseteq D^r$, there is a corresponding (non-conditional) predicate $R \subseteq D^{r+1}$ such that $\NRD(R, n) = \Theta(\NRD(P \mid Q, n) \cdot n)$, showing that conditional non-redundancy is not merely a useful abstraction but also a necessary quantity to analyze the non-redundancy of all predicates. In this appendix, we show that their arguments can be adapted to prove a similar phenomenon holds for \emph{all} nontrivial conditional predicates.

Recall from Definition~\ref{def:box-product} the notion of a box product of two predicates.

\begin{theorem}
Let $D$ be a domain of size at least $2$  with $\{0,1\} \subseteq D$. Let $P \subsetneq Q \subseteq D^r$ be a conditional predicate with $P$ non-empty. Further define $R \subsetneq S \subseteq D^{2r}$ such that
\[
  R \mid S = (P \mid Q) \boxtimes (\OR_r \mid \{0,1\}^r)
\]
Then, $\NRD(R, n) = \Theta_r(\NRD(P \mid Q, n) \cdot n^r)$, where $\Theta_r$ hides a factor depending only on $r$.
\end{theorem}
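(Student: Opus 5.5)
The plan is to prove the two inequalities separately. The lower bound comes from a blow-up construction in the spirit of Lemma~\ref{lem:R1S1}. The upper bound comes from classifying each edge of a non-redundant instance of $\CSP(R)$ by the kind of violation its witness produces. Unwinding Definition~\ref{def:box-product}, we have $S = Q\times\{0,1\}^r$ and $R = (P\times\{0,1\}^r)\cup(Q\times \OR_r)$. So a tuple $(x,y)\in D^{r}\times D^r$ lies outside $R$ exactly when one of three things happens:
\begin{itemize}
\item[(a)] $y\notin\{0,1\}^r$;
\item[(b)] $y\in\{0,1\}^r$ and $x\notin Q$;
\item[(c)] $y=0^r$ and $x\in Q\setminus P$.
\end{itemize}

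\emph{Lower bound.} Let $H=(V_1,\dots,V_r,E)$ be an $r$-partite non-redundant instance of $\CSP(P\mid Q)$ on $n$ vertices; by the partite reduction quoted in the preliminaries, we may assume $|E| = \Omega_r(\NRD(P\mid Q,n))$. Add fresh sets $W_1,\dots,W_r$, each of size $n$, and take all edges $(e,w)$ with $e\in E$ and $w\in W_1\times\cdots\times W_r$. For the edge $(e,w)$, the witness uses $\psi_e$ on $V$, and on each $W_i$ it assigns $0$ to $w_i$ and $1$ to every other vertex. We check the three kinds of edges:
\begin{itemize}
\item an edge $(e',w')$ with $e'\neq e$ maps into $P\times\{0,1\}^r\subseteq R$;
\item an edge $(e,w')$ with $w'\ne w$ maps into $(Q\setminus P)\times \OR_r\subseteq R$;
\item $(e,w)$ itself maps to $(Q\setminus P)\times\{0^r\}$, which lies outside $R$.
\end{itemize}
This gives a non-redundant instance with $(r+1)n$ vertices and $|E|\,n^r$ edges. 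To rescale back to $n$ vertices, note that any sub-instance of a non-redundant instance is non-redundant. Restricting an $r$-partite instance to a random $1/c$ fraction of each part keeps an expected $c^{-r}$ fraction of its edges. Hence $\NRD(P\mid Q,n/(r+1))=\Omega_r(\NRD(P\mid Q,n))$, and so $\NRD(R,n)=\Omega_r(\NRD(P\mid Q,n)\,n^r)$.

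\emph{Upper bound.} Let $F$ be a non-redundant instance of $\CSP(R)$ on $n$ vertices. For each edge $(f,w)$ fix a witness $\psi$, and label the edge (a), (b) or (c) according to which case $\psi(f,w)\notin R$ falls into.
\begin{itemize}
\item \emph{Type (a).} Since $R\subseteq D^r\times\{0,1\}^r$, the witness violates every edge with the same suffix $w$. So each $w\in V^r$ is the suffix of at most one type-(a) edge, giving at most $n^r$ such edges.
\item \emph{Type (b).} Since $R\subseteq Q\times D^r$, the witness violates every edge with the same prefix $f$. So each $f$ carries at most one type-(b) edge, again giving at most $n^r$.
\item \emph{Type (c).} Group these edges by suffix $w$. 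For a fixed $w$, take a type-(c) edge $(f,w)$ with witness $\psi$, so $\psi(w)=0^r$. Every other edge $(f',w)$ is satisfied, and since $0^r \notin \OR_r$ this forces $\psi(f')\in P$, while $\psi(f)\in Q\setminus P$. Hence the prefixes of the type-(c) edges with suffix $w$ form a non-redundant instance of $\CSP(P\mid Q)$ on at most $n$ vertices. Summing over the at most $n^r$ suffixes gives at most $n^r\,\NRD(P\mid Q,n)$ type-(c) edges.
\end{itemize}
Finally, $P\neq\emptyset$ and $Q\setminus P\neq\emptyset$, so vertex-disjoint edges give $\NRD(P\mid Q,n)\ge\lfloor n/r\rfloor$. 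The first two contributions, $2n^r$ in total, are therefore $O_r(n^r\,\NRD(P\mid Q,n))$, and $\NRD(R,n)=O_r(\NRD(P\mid Q,n)\,n^r)$.

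The main obstacle is the upper bound. A naive grouping by suffix breaks down because a witness may violate an edge in the ``wrong'' way, namely types (a) or (b) rather than (c). The point of the case split is that these degenerate violations propagate to every edge sharing a suffix (type (a)) or a prefix (type (b)). That caps them at $n^r$ each, so no bound on $\NRD(Q,n)$ is needed beyond the trivial $n^r$.
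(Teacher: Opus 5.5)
Your proof is correct, and your lower bound is essentially the paper's construction. The paper pairs an instance on $n/2$ vertices with all $r$-subsets of a fresh set of $n/2$ vertices and uses the same $0/1$ indicator witness. Your separate parts $W_1,\dots,W_r$ and your random-restriction justification of $\NRD(P\mid Q,n/(r+1))=\Omega_r(\NRD(P\mid Q,n))$ make explicit a rescaling step that the paper only asserts (up to routine rounding for very small parts).

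Your upper bound takes a genuinely different route. The paper applies the triangle inequality (Lemma~\ref{lem:NRD-triangle}) to get $\NRD(R,n)\le\NRD(R\mid S,n)+\NRD(S,n)$. It bounds the first term by $n^r\,\NRD(P\mid Q,n)$ by grouping edges by suffix, which is exactly your type-(c) argument. It bounds $\NRD(S,n)\le n^{r+1}$ separately: fixing a prefix, it injects the edges with that prefix into the suffix vertices that the witness sends outside $\{0,1\}$. It then absorbs $n^{r+1}$ using $\NRD(P\mid Q,n)=\Omega_r(n)$, which is where $P\neq\emptyset$ is really used.

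Your three-way classification effectively inlines the proof of the triangle inequality. The gain is that you split the $S$-violations into suffix-type and prefix-type, and note that such an edge must be the unique edge with its suffix (respectively prefix). This gives the sharper explicit bound $\NRD(R,n)\le n^r\,\NRD(P\mid Q,n)+2n^r$, and incidentally $\NRD(S,n)\le 2n^r$ rather than $n^{r+1}$. It also needs only $\NRD(P\mid Q,n)\ge 1$ to absorb the error term. The paper's route is more modular, since it reuses a general lemma, but it pays for this with a weaker bound on the non-conditional part.
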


\begin{proof}
First, we prove that $\NRD(R, n) = \Omega)r(\NRD(P \mid Q, n) \cdot n^r)$. To see why, let $H = (V, E)$ be a non-redundant instance of $\CSP(P \mid Q)$ on $n/2$ vertices with $|E| = \NRD(P \mid Q, n/2) = \Omega(P \mid Q, n)$. Let $V'$ be a disjoint set of $n/2$ vertices. Define $H' = (V \cup V', E' := E \times \binom{V'}{r})$ which has $\Omega)r(\NRD(P \mid Q, n) \cdot n^r)$ constraints. We claim that $H'$ is a non-redundant instance of $\CSP(R)$. To see why, fix an edge $e' \in E'$, we need to find an assignment $\psi: V \cup V' \to D$ which satisfies every constraint of $H'$ except $e'$. Observe that $\pi_{[r]} e'$ is an edge of $H$, thus by the non-redundancy of $H$ as instance of $\CSP(P \mid Q)$ there is an assignment $\phi : V \to D$ which satisfies every edge of $H$ except $\pi_{[r]} e'$. We can thus define $\psi$ as follows
\[
  \psi(v) :=\begin{cases}
    \phi(v) & v \in V\\
    0 & v = e'_i\text{ for some }i \in \{r+1, \hdots, 2r\}\\
    1 & \text{otherwise}
    \end{cases}
\]
To see why this works, recall that $R = (P \times \{0,1\}^r) \cup (Q \times \OR_r)$. Thus, for any $e'' \in E' \setminus \{e'\}$ with $\pi_{[r]} {e''} \neq \pi_{[r]} e'$, we have that $\psi(e'')\in P \times \{0,1\}^r$.  Likewise, if $e'' \in E' \setminus \{e'\}$ with $\pi_{\{r+1, \hdots, 2r\}} {e''} \neq \pi_{\{r+1, \hdots, 2r\}} e$, then $\psi(e'') \in Q \times \OR_r$. However, $\psi(e') = (\phi(\pi_{[r]}e'), 0, \hdots, 0) \not\in R$. This shows that $H'$ is a non-redundant instance of $\CSP(R)$.

We now turn to showing that $\NRD(R, n) = O_r(\NRD(P \mid Q, n) \cdot n^r)$ By Lemma~\ref{lem:NRD-triangle}, we have that
\[
  \NRD(R, n) \le \NRD(R \mid S, n) + \NRD(S, n).
\]
Thus, it suffices to show that $\NRD(R \mid S, n) =  O_r(\NRD(P \mid Q, n) \cdot n^r)$ and $\NRD(S, n) =  O_r(\NRD(P \mid Q, n) \cdot n^r)$.

For the former upper bound, Let $H = (V, E \subseteq V^{2r})$ be a non-redundant instance of $\CSP(R \mid S)$ with $|V| = n$ and $|E| = \NRD(R \mid S, n)$. By the pigeonhole principle, there must be some string $s \in V^r$ such that $\pi_{\{r+1, \hdots, 2r\}} e = s$ for at least $|E|/n^r$ choices of $e \in E$. Define $H' = (V, E' \subseteq V^r)$ where
\[
  E' := \{\pi_{[r]} e : e \in E, \pi_{\{r+1, \hdots, 2r\}} e = s\}.
\]
We claim that $H'$ is a non-redundant instance of $\CSP(P \mid Q)$. Pick distinct $e'_1, e'_2 \in E'$ and let $e_1, e_2 \in E$ be the corresponding edge in $H$. Since $H$ is non-redundant, there is an assignment $\psi : V \to D$ which satisfies every edge of $H$ except $e_1$. In particular, $\psi(e_2) \in R = (P \times \{0,1\}^r) \cup (Q \times \OR_r)$ but $\psi(e_1) \in S \setminus R = (Q \setminus P) \times \{0^r\}$. Since $e_1$ and $e_2$ share the same last $r$ coordinates, we must have that $\psi(e_2) \in P \times \{0^r\}$. In other words (returning to $H'$), $\psi(e'_1) \in Q \setminus P$ but $\psi(e'_2) \in P$ for every $e'_2 \in E' \setminus \{e'_1\}$. This proves that $H'$ is a non-redundant instance of $\CSP(P \mid Q)$. Therefore, $\NRD(P \mid Q, n) = \Omega(\NRD(R \mid S, n) / n^r)$, as desired.

As a last step, we must bound $\NRD(S, n)$. Recall that $S = Q \times \{0,1\}^r$. Consider a non-redundant instance $H = (V, E \subseteq V^{2r})$ of $\CSP(S)$. Given a string $s \in V^r$, let $H_s = (V, E_s)$ be the subinstance where
\[
  E_s := \{e \in E : \pi_{[r]} e = s\}.
\]
Then, we claim that $|E_s| \le n$. To see why, fix any $e \in E_s$ and let $\psi : V \to D$ be a map which fails to satisfy $e$ but satisfies all other $e' \in E_s$. Since $\pi_{[r]} e' = \pi_{[r]} e = s$, we must have that $\psi(s) \in Q$. Therefore, for $\psi(e) \not \in S$, we must have that $\psi(\pi_{\{r+1, \hdots, 2r\}} e) \not\in \{0,1\}^r$. That is, $\psi(e_i) \not\in \{0,1\}$ for some $i \in \{r+1, \hdots, 2r\}$. Crucially, this $e_i$ cannot appear as a vertex in $\pi_{\{r+1, \hdots, 2r\}} e'$ for any $e' \in E_s \setminus \{e\}$ (or else $\psi(e') \in S$ would be false). Therefore, there exists an injection from $E_s$ to $V$ by picking this vertex which maps outside of $\{0,1\}$. This proves that $|E_s| \le n$.

Since there are $n^r$ choices for $s \in V^r$, we have proved that $\NRD(S, n) \le n^{r+1}$. Recall that $P$ is non-empty. Thus, $\NRD(P \mid Q, n) \ge n/r = \Omega_r(n)$ be considering an instance with $n/r$ disjoint clauses. Thus, 
\[
  \NRD(S, n) \le n^{r+1} = O_r(\NRD(P \mid Q, n) \cdot n^r).
\]
This fact completes the proof.
\end{proof}

\section{Lower Bounds for Non-Boolean Predicates} \label{app:non-Boolean}

Going beyond Theorem~\ref{thm:BoolBCK-projections}, we also use Theorem~\ref{thm:shrink-lb} to deduce novel super-linear lower bounds for \emph{non-Boolean} predicates. Very recently, \cite{BGJLW25} developed an analogue of the Boolean notion of balanced which applies to any domain by a notion called the \emph{Catalan identities}. To illustrate this notion, consider a finite domain $D$ and a word $w \in D^{\ell}$, where $\ell$ is an odd integer. We can play the following ``cancellation game'' on $w$. If $w$ has two consecutive symbols which are equal, delete both, and recurse until no more progress can be made. For example, if $D = \{0,1,2\}$ then playing the game with $0221221$ yields
\[
    0221221 \to 01221 \to 011 \to 0.
\]
It turns out the outcome of this game is always the same no matter the order the cancellations are done. Furthermore, this cancellation game can be used to determine whether a predicate has a Mal'tsev embedding. As an example, consider the predicate $\Cat_5 \subseteq \{0,1,2\}^5$ defined by
\[
  \Cat_5 = \{01012, 11111, 12201, 22222, 20120\}.
\]
Now build the following matrix whose \emph{columns} are tuples in $\Cat_5$.
\begin{align}
\begin{pmatrix}
0 & 1 & 1 & 2 & 2\\
1 & 1 & 2 & 2 & 0\\
0 & 1 & 2 & 2 & 1\\
1 & 1 & 0 & 2 & 2\\
2 & 1 & 1 & 2 & 0
\end{pmatrix} \label{eq:cat5}
\end{align}
If one plays the cancellation game on each \emph{row} of this matrix, one is left with the tuple $00000$ which is \emph{not} in $\Cat_5$. By a deep result of \cite{BGJLW25}, this implies that $\Cat_5$ lacks a Mal'tsev embedding over any domain, so we have no evidence that $\Cat_5$ has (near-)linear NRD. In fact, similar to how $\BoolBCK$ is the `canonical' non-balanced predicate with five tuples, $\Cat_5$ is the canonical predicate with five tuples which lacks a Mal'tsev embedding. More precisely, the rows of (\ref{eq:cat5}) correspond to every possible cancellation game on $5$-tuples. As such, one can show that any non-redundancy lower bound of $\CSP(\Cat_5)$ implies a non-redundancy lower bound for $\CSP(\BoolBCK)$.This is witnessed by the map $\Sigma : \Cat_5^{+} \to \BoolBCKp$ defined by
\begin{align*}
  \Sigma(00000) &= 100010001\\
  \Sigma(01012) &= 010100001\\
  \Sigma(11111) &= 001100010\\
  \Sigma(12201) &= 001010100\\
  \Sigma(22222) &= 010001100\\
  \Sigma(20120) &= 100001010
\end{align*}

Therefore, resolving the non-redundancy of $\Cat_5$ is an essential question toward determining which predicates have linear non-redundancy.

Similar to $\BoolBCK$, if we let $\Cat_5^{+} = \Cat_5 \cup \{00000\}$, then $\Cat_5^{+}$ can be shown to have linear non-redundancy. Thus, we can always assume every non-assignment to a non-redundant instance of $\CSP(\Cat_5^{+})$ is $00000$, reducing our problem to the non-redundancy of the conditional predicate $\Cat_5 \mid \Cat_{5}^{+}$.  Although proving super-linear lower bounds for $\Cat_5$ is harder than for $\BoolBCK$, we still give a super-linear lower bound for one of the maximal non-trivial projections of $\Cat_5 \mid \Cat_{5}^{+}$.

\begin{theorem}[See Theorem~\ref{thm:catalan-cousins}]\label{thm:cat5}
$\NRD(\pi_{[4]}\Cat_5 \mid \pi_{[4]}\Cat_5^{+}, n) \ge n^{6/5}$.
\end{theorem}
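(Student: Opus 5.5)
The plan is to follow the template of Theorem~\ref{thm:BoolBCK-projections}, with $\pi_{[4]}\Cat_5 \mid \pi_{[4]}\Cat_5^{+}$ as the target predicate. Let $R_2 \mid S_2 = (C_6^* \mid C_6) \boxtimes (C_6^* \mid C_6)$. I will show that $R_2 \mid S_2$ is an $\mathcal I$-substructure of $\pi_{[4]}\Cat_5 \mid \pi_{[4]}\Cat_5^{+}$ for some $\mathcal I = (I_1, I_2, I_3, I_4)$ with every $|I_j| = 3$. Then I combine Lemma~\ref{lem:R2S2} with Theorem~\ref{thm:shrink-lb}.

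For the bound, recall that Lemma~\ref{lem:R2S2} gives non-redundant instances with $m = \Theta(n^3)$ edges. For every $I \subsetneq [4]$, and in particular every $I$ of size $3$, these satisfy $|\pi_I E| = \Theta(n^{2.5}) = \Theta(m^{5/6})$. Consecutive choices of $n$ change $m$ by only a constant factor, so the hypotheses of Theorem~\ref{thm:shrink-lb} hold with $\eps = 1/6$. This yields $\NRD(\pi_{[4]}\Cat_5 \mid \pi_{[4]}\Cat_5^{+}, n) = \Omega(n^{6/5})$, which is the claimed bound up to the constant.

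For the substructure, I use Proposition~\ref{prop:projection-compact}. The target is
\[
  \pi_{[4]}\Cat_5^{+} = \{0000, 0101, 1111, 1220, 2222, 2012\}.
\]
The only $0/0$ input is $S_2 \setminus R_2 = \{(0,0,0,0)\}$, since $C_6 \setminus C_6^* = \{(0,0)\}$ in each factor. So I need a map $\Sigma : S_2 \to \pi_{[4]}\Cat_5^{+}$ on the $36$ tuples of $S_2$ with three properties. First, $\Sigma(0,0,0,0) = 0000$. Second, $\Sigma(x) \neq 0000$ for every other $x \in S_2$. Third, each output coordinate $j$ ignores one input coordinate $i_j$: whenever $x, y \in S_2$ differ only in coordinate $i_j$, we need $\Sigma(x)_j = \Sigma(y)_j$.

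The main obstacle is finding $\Sigma$, which I would do with Z3, exactly as in Lemmas~\ref{lem:J1} and~\ref{lem:J2}, by searching over all $4^4$ choices of $(i_1, \ldots, i_4)$. A hand analysis can prune this search first. Take the four neighbours of $0000$ in $S_2$, namely $(1,0,0,0)$, $(0,1,0,0)$, $(0,0,1,0)$ and $(0,0,0,1)$. If a neighbour differs from $0000$ in input coordinate $i$, its image must be $0$ on every output coordinate $j$ with $i_j = i$. The nonzero target tuples have zeros only at these places: $0101$ at positions $\{1,3\}$, $1220$ at $\{4\}$, $2012$ at $\{2\}$, and $1111$, $2222$ nowhere.

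This forces a strong restriction. Each input coordinate that is ignored by some output coordinate needs a target tuple whose zero set covers all output coordinates ignoring it. A natural first guess is therefore: output coordinates $1$ and $3$ ignore one input, output coordinate $4$ ignores a second, and output coordinate $2$ ignores a third. With the ignored inputs fixed, I would propagate these equalities along the $C_6 \times C_6$ structure of $S_2$ and check consistency, with the solver filling in the rest. If no assignment with $|I_j| = 3$ exists, the fallback is $R_1 \mid S_1$ from Lemma~\ref{lem:R1S1} with suitable index sets. However, the exponent $6/5$ in the statement strongly suggests that the $R_2 \mid S_2$ route with $3$-element sets is the intended one.
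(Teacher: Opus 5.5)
Your plan is the paper's route: the statement is the $P_3 \mid Q_3$ case of Theorem~\ref{thm:catalan-cousins}, which feeds the instances of Lemma~\ref{lem:R2S2} into Theorem~\ref{thm:shrink-lb} with $\eps = 1/6$ via the substructure of Lemma~\ref{lem:P3Q3}, and the paper likewise only obtains $\Omega(n^{6/5})$. The map $\Sigma$ you leave to the solver is given there as an explicit table with $I_j = [4]\setminus\{j\}$, so each output coordinate ignores a different input and the four neighbours of $0000$ go to $0101$, $2012$, $0101$, $1220$; this fits your zero-set pruning, though not your ``natural first guess'' that outputs $1$ and $3$ share an ignored input.
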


This result can be viewed as one step closer toward showing that Mal'tsev embeddings are equivalent to having (near-)linear non-redundancy.

\subsection{Lower Bounds}

In addition to $\pi_{[4]}\Cat_5 \mid \pi_{[4]}\Cat_5^{+}$, we also give stronger lower bounds for some arity $3$ projections. More precisely, consider the following predicates. For ease of discussion, we give each succinct names.

\begin{align*}
P_1 &:= \pi_{\{1,3,4\}} \Cat_5 = \{001,111,120,222,212\}\\
Q_1 &:= \pi_{\{1,3,4\}} \Cat_5^{+} = P_1 \cup \{000\}\\
P_2 &:= \pi_{\{1,2,4\}} \Cat_5 = \{011,111,120,222,202\}\\
Q_2 &:= \pi_{\{1,2,4\}} \Cat_5^{+} = P_2 \cup \{000\}\\
P_3 &:= \pi_{\{1,2,3,4\}} \Cat_5 = \{0101, 1111, 1220, 2222, 2012\}\\
Q_3 &:= \pi_{\{1,2,3,4\}} \Cat_5^{+} = P_3 \cup \{0000\}
\end{align*}

\begin{remark}
A projection  we do \emph{not} consider in this paper is
\[
    \pi_{\{3,4,5\}} \Cat_5 = \{012,111,201,222,120\},
\]
which is equivalent to a ternary predicate identified by \cite{BCK20} which has lower non-redundancy than $\BoolBCK$. See Section 7 of \cite{BG25} for further discussion.
\end{remark}

The goal of this section is to prove the following lower bounds.
\begin{theorem}\label{thm:catalan-cousins}
We following lower bounds hold.
\begin{align}
\NRD(P_1 \mid Q_1, n) &\ge \Omega(n^{4/3}) \label{eq:P1Q1}\\
\NRD(P_2 \mid Q_2, n) &\ge \Omega(n^{4/3}) \label{eq:P2Q2}\\
\NRD(P_3 \mid Q_3, n) &\ge \Omega(n^{6/5}) \label{eq:P3Q3}
\end{align}
\end{theorem}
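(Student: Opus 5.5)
The plan is to derive all three bounds from Theorem~\ref{thm:shrink-lb}, using the two shrinking families already constructed: Lemma~\ref{lem:R1S1} for the arity-$3$ predicates and Lemma~\ref{lem:R2S2} for the arity-$4$ one. Lemma~\ref{lem:R1S1} gives non-redundant instances of $\CSP(R_1 \mid S_1)$ that are $O(m^{1/4})$-shrinking for every proper index set. Feeding these into Theorem~\ref{thm:shrink-lb} with $\eps = 1/4$ yields the exponent $\frac{1}{1-1/4} = 4/3$. This is the target for (\ref{eq:P1Q1}) and (\ref{eq:P2Q2}).

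Similarly, Lemma~\ref{lem:R2S2} gives $O(m^{1/6})$-shrinking instances of $\CSP(R_2 \mid S_2)$. With $\eps = 1/6$ this yields exponent $6/5$, which is the target for (\ref{eq:P3Q3}), exactly as in the proof of Theorem~\ref{thm:BoolBCK-projections}. Two technical points need care:
\begin{itemize}
\item Lemma~\ref{lem:R2S2} only asserts the instances for infinitely many $m$. I would repair this as in Lemma~\ref{lem:R1S1}: take $n$ ranging over the parameters where girth-$6$ graphs with $\Theta(n^{1.5})$ edges exist (e.g.\ incidence graphs of projective planes, so $n$ is a prime power plus one-ish). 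Then delete a few edges, which preserves both non-redundancy and shrinkage up to constants. This gives the ratio condition $|E_{i+1}|/|E_i| \le C_2$ of Theorem~\ref{thm:shrink-lb}.
\item The shrinkage parameter in Theorem~\ref{thm:shrink-lb} should be read as $|\pi_I E| \le C_1 |E|^{1-\eps}$.
\end{itemize}

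The substantive step is constructing the substructures via Proposition~\ref{prop:projection-compact}. For $P_1 \mid Q_1$ and $P_2 \mid Q_2$ (domain $\{0,1,2\}$, arity $3$), I need a map $\Sigma : S_1 \to Q_i$ with three properties: it sends $R_1$ into $P_i$, it sends $S_1 \setminus R_1 = \{(0,0)\} \times \{0\}$ to $000$, and each output coordinate depends on a proper subset $I_j \subsetneq [3]$ of input coordinates, ideally of size $2$ so that shrinkage applies. For $P_3 \mid Q_3$ I need $\Sigma : S_2 \to Q_3$ with $\Sigma^{-1}(0000) = \{(0,0,0,0)\}$ and each of the four output coordinates depending on at most $3$ of the $4$ inputs.

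Structurally, the condition forces two things. The all-zero output must be produced only from the all-zero input, while every output coordinate is locally blind to some input coordinate. So changing any single blind coordinate away from $0$ must change the output to some tuple in $P_i$ that agrees with $0\cdots0$ in the coordinates not depending on it. For example, for $P_3$ every tuple of $P_3$ with a $0$ in some position must be reachable compatibly.

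I would search for these tables with an SMT solver (Z3), using the encoding of Appendix~\ref{app:sat}, enumerating the candidate $\mathcal I$ with $|I_j| = 2$ (resp.\ $3$) up to symmetry. The resulting $\Sigma$ would be presented as an explicit table, checked by inspection as in Lemmas~\ref{lem:J1} and~\ref{lem:J2}.

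Granting these substructures, the rest is mechanical. Lemma~\ref{lemma:projection-NRD} transfers non-redundancy to the projection hypergraph. Theorem~\ref{thm:shrink-lb} then gives $\Omega(n^{4/3})$ for $P_1 \mid Q_1$ and $P_2 \mid Q_2$, and $\Omega(n^{6/5})$ for $P_3 \mid Q_3$.

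The main obstacle is the existence of the witnessing maps, especially for the arity-$3$ cases. There each output may see only two of the three coordinates of $R_1 \mid S_1$, and the third factor $\{1,2\} \mid \{0,1,2\}$ must be ``detected'' only jointly. If no $\mathcal I$ of $2$-sets works, I would fall back to allowing some $I_j = [3]$. That is harmful only if such an output coordinate's part is large, so I would instead check whether Lemma~\ref{lem:R1S1} can be adapted to an instance shrinking on the specific $\mathcal I$ found. Failing that, I would try $R_2 \mid S_2$ with arity-$3$ sets, accepting the weaker exponent $6/5$.
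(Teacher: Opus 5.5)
Your plan matches the paper's proof. Feed the shrinking families of Lemma~\ref{lem:R1S1} ($\eps = 1/4$) and Lemma~\ref{lem:R2S2} ($\eps = 1/6$) into Theorem~\ref{thm:shrink-lb}. This reduces everything to two kinds of $\mathcal I$-substructures: $R_1 \mid S_1$ in $P_1 \mid Q_1$ and $P_2 \mid Q_2$, and $R_2 \mid S_2$ in $P_3 \mid Q_3$, to be found with Z3.

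Two of your side remarks are sound. Reading the shrinkage hypothesis as $|\pi_I E| \le C_1 |E|^{1-\eps}$ is exactly what the proof of Theorem~\ref{thm:shrink-lb} uses. Your worry about the ratio condition for Lemma~\ref{lem:R2S2} is a real technicality that the paper glosses over, and your fix works. Girth-$6$ bipartite graphs with $\Theta(n^{3/2})$ edges exist for every $n$: pad projective-plane incidence graphs with isolated vertices and use Bertrand's postulate. This gives $|E_n| = \Theta(n^3)$, nondecreasing, with bounded consecutive ratios.

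The gap is that you never establish the one non-formal ingredient, the witnessing maps $\Sigma$. You only say you would search for them, and you anticipate that the search might fail. Everything else follows formally from results already in the paper, so as written your argument is conditional exactly at its crux.

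The paper closes this step by exhibiting explicit tables (Lemmas~\ref{lem:P1Q1}, \ref{lem:P2Q2}, \ref{lem:P3Q3}), which can be checked by hand via Proposition~\ref{prop:projection-compact}.
\begin{itemize}
\item For the two arity-$3$ targets, each output coordinate reads only a $2$-subset of the inputs, and all three $2$-subsets occur.
\item For $P_3 \mid Q_3$, output coordinate $j$ ignores input $j$, i.e.\ $I_j = [4] \setminus \{j\}$.
\end{itemize}
The printed table for $P_2 \mid Q_2$ lands in $P_2$ only after swapping its second and third output coordinates, which is a harmless relabeling.

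Your fallback of allowing some $I_j = [3]$ could never help. Since $\pi_{[3]} E = E$, that part alone has $|E|$ vertices, and the projection hypergraph certifies nothing beyond a linear bound.
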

\begin{proof}
By applying Lemma~\ref{lem:R1S1} and Lemma~\ref{lem:R2S2} to Theorem~\ref{thm:shrink-lb} with $\eps = 1/4$ and $\eps = 1/6$, respectively, it suffices to prove Lemma~\ref{lem:P1Q1}, Lemma~\ref{lem:P2Q2}, and Lemma~\ref{lem:P3Q3} to establish (\ref{eq:P1Q1}), (\ref{eq:P2Q2}), and (\ref{eq:P3Q3}), respectively.
\end{proof}

\begin{lemma}\label{lem:P1Q1}
There exists $\mathcal I = \{I_1, I_2, I_3\}$ each of size $3$ such that $R_1 \mid S_1$ is an $\mathcal I$-substructure of $P_1 \mid Q_1$.
\end{lemma}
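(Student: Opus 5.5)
The plan is to apply Proposition~\ref{prop:projection-compact} directly. Here $R_1 \mid S_1 = (C_6^{*} \mid C_6) \boxtimes (\{1,2\} \mid \{0,1,2\})$ has arity $r_1 = 3$, and $P_1 \mid Q_1$ also has arity $3$. So each $I_j$ must be a $3$-element subset of $[3]$, which forces $I_1 = I_2 = I_3 = [3]$. With this choice, condition~(2) of Proposition~\ref{prop:projection-compact} is vacuous, since there is no $i \notin I_j$. It therefore suffices to exhibit an arbitrary map $\Sigma : S_1 \to Q_1$ with $\Sigma(R_1) \subseteq P_1$ and $\Sigma(S_1 \setminus R_1) \subseteq Q_1 \setminus P_1$.

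The first step is to compute $S_1 \setminus R_1$ from Definition~\ref{def:box-product}. We have $S_1 = C_6 \times \{0,1,2\}$ and $R_1 = (C_6^{*} \times \{0,1,2\}) \cup (C_6 \times \{1,2\})$. A tuple $(a,b,c) \in S_1$ avoids $R_1$ exactly when $(a,b) \notin C_6^{*}$ and $c \notin \{1,2\}$, that is, when $(a,b,c) = (0,0,0)$. On the target side, $Q_1 \setminus P_1 = \{000\}$.

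The second step is to define $\Sigma(0,0,0) := (0,0,0)$ and $\Sigma(x) := (1,1,1)$ for every other $x \in S_1$. This is valid because $111 \in P_1$. Both inclusions in condition~(1) then hold by inspection, and Proposition~\ref{prop:projection-compact} yields the witnessing maps $g_j := \Sigma_j$.

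There is no real obstacle for the statement as written. The one point to flag is how it is used in Theorem~\ref{thm:catalan-cousins}. Applying Theorem~\ref{thm:shrink-lb} requires shrinking for each $I_j$, and Lemma~\ref{lem:R1S1} only provides shrinking for proper subsets $I \subsetneq [3]$. So to actually obtain the $\Omega(n^{4/3})$ bound, one needs the stronger version with $|I_j| = 2$. For that version I would encode the search as in Appendix~\ref{app:sat}: choose for each output coordinate $j$ an input coordinate $i_j$ on which $\Sigma_j$ may not depend, and solve for $\Sigma$ with a SAT/SMT solver over the $18$ tuples of $S_1$, trying all $3^3$ choices of $(i_1,i_2,i_3)$. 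Finding a consistent assignment there, forced to send only $(0,0,0)$ to $000$, is the genuinely hard step, and it is where a computer search would be required.
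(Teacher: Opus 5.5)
Your proof is correct for the lemma exactly as worded. $R_1\mid S_1$ has arity $3$, so three-element sets $I_j\subseteq[3]$ must all equal $[3]$, and condition (2) of Proposition~\ref{prop:projection-compact} is then vacuous. Since $S_1\setminus R_1=\{(0,0,0)\}$, $Q_1\setminus P_1=\{000\}$ and $111\in P_1$, your map works.

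This is not the paper's route, and the difference matters. The ``size~$3$'' in the statement is a typo. The paper's proof says every output coordinate depends on only two input coordinates, takes $(I_1,I_2,I_3)=(\{1,2\},\{2,3\},\{1,3\})$, and exhibits an explicit Z3-found $\Sigma$. That version implies the literal one, because enlarging each $I_j$ preserves the substructure property. Your version gives nothing downstream: with $I_j=[3]$ the hypergraph $\pr_{\mathcal I}H$ has $3|E|$ vertices, so Theorem~\ref{thm:shrink-lb} yields only a linear bound. You diagnose this correctly yourself.

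So your proposal omits the actual content of the lemma, namely the $2$-local map, which you leave to an unperformed search. It is not a hard step, and the $18$-row table can be checked by hand. The paper's map is $\Sigma(x)=(g_1(x_1,x_2),\,g_2(x_2,x_3),\,g_3(x_1,x_3))$, where:
\begin{itemize}
\item $g_1$ is $2$ if $x_1=2$, is $0$ if $x_1\neq 2$ and $x_2=0$, and is $1$ otherwise;
\item $g_2$ is $0$ if $x_2=0$, is $2$ if $(x_2,x_3)=(1,0)$, and is $1$ otherwise;
\item $g_3$ is $x_1$ if $x_1\neq 0$, and for $x_1=0$ it is $0$ if $x_3=0$ and $1$ otherwise.
\end{itemize}
Then $(0,0,0)\mapsto 000$. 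Every other tuple of $S_1$ lands in $P_1$:
\begin{itemize}
\item $(0,0,1),(0,0,2),(1,0,\ast)\mapsto 001$;
\item $(0,1,0)\mapsto 120$;
\item $(0,1,1),(0,1,2),(1,2,\ast)\mapsto 111$;
\item $(2,1,0)\mapsto 222$;
\item $(2,1,1),(2,1,2),(2,2,\ast)\mapsto 212$.
\end{itemize}
Combined with the shrinking of Lemma~\ref{lem:R1S1} on these two-element index sets, this gives the $\Omega(n^{4/3})$ bound.
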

\begin{proof}
By Proposition~\ref{prop:projection-compact}, it suffices to construct a map $\Sigma : S_1 \to Q_1$ such that the following conditions hold:
\begin{itemize}
\item $\Sigma(R_1) \subseteq  P_1$
\item $\Sigma(S_1 \setminus R_1) \subseteq Q_1 \setminus P_1$
\item Every output coordinate of $\Sigma$ depends only on two input coordinates.
\end{itemize}
Using the Z3 SMT solver~\cite{Z3}, we constructed such a $\Sigma$ as follows, which can be checked by inspection.
\begin{align*}
\Sigma(0, 0, 0) &= (0, 0, 0) & 
\Sigma(0, 0, 1) &= (0, 0, 1)\\
\Sigma(0, 0, 2) &= (0, 0, 1) & 
\Sigma(0, 1, 0) &= (1, 2, 0)\\
\Sigma(0, 1, 1) &= (1, 1, 1) & 
\Sigma(0, 1, 2) &= (1, 1, 1)\\
\Sigma(1, 0, 0) &= (0, 0, 1) & 
\Sigma(1, 0, 1) &= (0, 0, 1)\\
\Sigma(1, 0, 2) &= (0, 0, 1) & 
\Sigma(1, 2, 0) &= (1, 1, 1)\\
\Sigma(1, 2, 1) &= (1, 1, 1) & 
\Sigma(1, 2, 2) &= (1, 1, 1)\\
\Sigma(2, 1, 0) &= (2, 2, 2) & 
\Sigma(2, 1, 1) &= (2, 1, 2)\\
\Sigma(2, 1, 2) &= (2, 1, 2) & 
\Sigma(2, 2, 0) &= (2, 1, 2)\\
\Sigma(2, 2, 1) &= (2, 1, 2) & 
\Sigma(2, 2, 2) &= (2, 1, 2)
\end{align*}
Here $(I_1, I_2, I_3) = (\{1,2\},\{2,3\},\{1,3\}).$
\end{proof}

\begin{lemma}\label{lem:P2Q2}
There exists $\mathcal I = \{I_1, I_2, I_3\}$ each of size $3$ such that $R_1 \mid S_1$ is an $\mathcal I$-substructure of $P_2 \mid Q_2$.
\end{lemma}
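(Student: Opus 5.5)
Since $R_1 \mid S_1 = (C_6^{*} \mid C_6) \boxtimes (\{1,2\} \mid \{0,1,2\})$ has arity $r_1 = 2+1 = 3$, a subset of $[3]$ of size $3$ can only be $[3]$ itself. The plan is therefore to take $I_1 = I_2 = I_3 = [3]$. With this choice, condition (2) of Proposition~\ref{prop:projection-compact} is vacuous: there is no index $i \in [3]$ with $i \notin I_j$. By Proposition~\ref{prop:projection-compact}, it then suffices to exhibit an arbitrary map $\Sigma : S_1 \to Q_2$ satisfying $\Sigma(R_1) \subseteq P_2$ and $\Sigma(S_1 \setminus R_1) \subseteq Q_2 \setminus P_2 = \{000\}$.

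The first step is to compute $S_1 \setminus R_1$ from Definition~\ref{def:box-product}. We have
\[
S_1 = C_6 \times \{0,1,2\}, \qquad R_1 = (C_6^{*} \times \{0,1,2\}) \cup (C_6 \times \{1,2\}).
\]
A tuple of $S_1$ lies outside $R_1$ exactly when its first two coordinates equal $(0,0)$, the unique element of $C_6 \setminus C_6^{*}$, and its third coordinate equals $0$, the unique element of $\{0,1,2\} \setminus \{1,2\}$. Hence $S_1 \setminus R_1 = \{(0,0,0)\}$.

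The second step is to define $\Sigma(0,0,0) = (0,0,0)$ and $\Sigma(x) = (1,1,1)$ for every $x \in R_1$. This is valid because $111 \in P_2$ and $000 \in Q_2 \setminus P_2$. Both conditions of Proposition~\ref{prop:projection-compact} then hold immediately. The witnessing maps are $g_j(x) = \Sigma(x)_j$ for $j \in [3]$, each of which depends on all three coordinates, as permitted.

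There is no genuine obstacle in the argument; the only check needed is that $000 \notin P_2$ and $111 \in P_2$, which can be read off the explicit list $P_2 = \{011,111,120,222,202\}$. One caveat concerns how the lemma is used. Theorem~\ref{thm:shrink-lb} requires the instances to be shrinking with respect to this $\mathcal I$, whereas Lemma~\ref{lem:R1S1} only gives shrinking for proper subsets $I \subsetneq [3]$. With $I_j = [3]$ we have $|\pi_{I_j} E| = |E|$, so no shrinking is available. The intended application in Theorem~\ref{thm:catalan-cousins} would therefore need a $\Sigma$ in which each output coordinate depends on only two input coordinates, in the style of the table in Lemma~\ref{lem:P1Q1}. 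That stronger version is not what the statement, as worded, asks for, and it is not proved by the argument above.
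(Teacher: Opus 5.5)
Your argument is correct for the statement as literally worded. Since $R_1 \mid S_1$ has arity $3$, sets of size $3$ force $I_1 = I_2 = I_3 = [3]$, and then condition (2) of Proposition~\ref{prop:projection-compact} is vacuous. You also correctly compute $S_1 \setminus R_1 = \{(0,0,0)\}$, so setting $\Sigma(0,0,0) = 000$ and $\Sigma \equiv 111$ on $R_1$ works.

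Your caveat is also right, and it is the real point: ``size $3$'' is a typo for ``size $2$''. The paper's own proof says every output coordinate of $\Sigma$ depends on only two input coordinates, and it lists the $I_j$ as $2$-subsets of $[3]$. Only that version combines with Lemma~\ref{lem:R1S1} in Theorem~\ref{thm:shrink-lb} (with $\eps = 1/4$) to give the $\Omega(n^{4/3})$ bound of Theorem~\ref{thm:catalan-cousins}. With $I_j = [3]$ one has $|\pi_{I_j} E| = |E|$, so the reduction yields only a linear bound.

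So your proof and the paper's differ in content, not just in route. Yours establishes only the vacuous version. The paper supplies what the intended version needs: an explicit, Z3-found map $\Sigma : S_1 \to Q_2$ with $\Sigma^{-1}(000) = \{(0,0,0)\}$, in which each output coordinate ignores one of the three input coordinates. Nothing in your proposal substitutes for that construction.

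If you check the paper's table yourself, note one discrepancy. As printed, its outputs $102$ and $220$ are not in $P_2 = \{011,111,120,222,202\}$. After exchanging the last two output coordinates, every output lands in $P_2$. The three output coordinates then depend on the input coordinates $\{2,3\}$, $\{1,3\}$, $\{1,2\}$ respectively, which is a valid witness for the size-$2$ version.
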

\begin{proof}
By Proposition~\ref{prop:projection-compact}, it suffices to construct a map $\Sigma : S_1 \to Q_2$ such that the following conditions hold:
\begin{itemize}
\item $\Sigma(R_1) \subseteq  P_2$
\item $\Sigma(S_1 \setminus R_1) \subseteq Q_2 \setminus P_2$
\item Every output coordinate of $\Sigma$ depends only on two input coordinates.
\end{itemize}
Using the Z3 SMT solver~\cite{Z3}, we constructed such a $\Sigma$ as follows, which can be checked by inspection.
\begin{align*}
\Sigma(0, 0, 0) &= (0, 0, 0) & 
\Sigma(0, 0, 1) &= (1, 0, 2)\\
\Sigma(0, 0, 2) &= (1, 0, 2) & 
\Sigma(0, 1, 0) &= (2, 2, 0)\\
\Sigma(0, 1, 1) &= (2, 2, 2) & 
\Sigma(0, 1, 2) &= (2, 2, 2)\\
\Sigma(1, 0, 0) &= (0, 1, 1) & 
\Sigma(1, 0, 1) &= (1, 1, 1)\\
\Sigma(1, 0, 2) &= (1, 1, 1) & 
\Sigma(1, 2, 0) &= (1, 1, 1)\\
\Sigma(1, 2, 1) &= (1, 1, 1) & 
\Sigma(1, 2, 2) &= (1, 1, 1)\\
\Sigma(2, 1, 0) &= (2, 2, 2) & 
\Sigma(2, 1, 1) &= (2, 2, 2)\\
\Sigma(2, 1, 2) &= (2, 2, 2) & 
\Sigma(2, 2, 0) &= (1, 0, 2)\\
\Sigma(2, 2, 1) &= (1, 0, 2) & 
\Sigma(2, 2, 2) &= (1, 0, 2)
\end{align*}
Here $(I_1, I_2, I_3) = (\{1,2\},\{2,3\},\{1,3\}).$
\end{proof}

\begin{remark}\label{remark:better}
For $\CSP(P_1 \mid Q_1)$ and $\CSP(P_2 \mid Q_2)$, since the projections sets have size at most $2$ and $\NRD(R_1 \mid S_1, n) = \Omega(n^{5/2})$, the methods of \cite{BGJLW25} can be used to immediately prove a lower bound of $\Omega(n^{5/4})$. However, using our shrinking instances, we get a sharper lower bound of $\Omega(n^{4/3})$.
\end{remark}

\begin{lemma}\label{lem:P3Q3}
There exists $\mathcal I = \{I_1, I_2, I_3, I_4\}$ each of size $3$ such that $R_2 \mid S_2$ is an $\mathcal I$-substructure of $P_3 \mid Q_3$.
\end{lemma}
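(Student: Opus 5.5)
The plan is to use Proposition~\ref{prop:projection-compact}, as in Lemmas~\ref{lem:J1}, \ref{lem:J2}, \ref{lem:P1Q1} and \ref{lem:P2Q2}. We will exhibit a map $\Sigma : S_2 \to Q_3$ with $\Sigma(R_2) \subseteq P_3$ and $\Sigma(S_2 \setminus R_2) \subseteq Q_3 \setminus P_3$. We also need indices $(i_1,i_2,i_3,i_4) \in [4]^4$ such that the $j$th output coordinate of $\Sigma$ does not depend on input coordinate $i_j$; that is, $I_j = [4]\setminus\{i_j\}$. The structure of the conditional predicates makes condition (1) very rigid. Since $S_2 = C_6 \times C_6$ and $R_2 = (C_6^*\times C_6)\cup(C_6\times C_6^*)$, we have $S_2 \setminus R_2 = \{(0,0,0,0)\}$; likewise $Q_3\setminus P_3=\{0000\}$. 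So condition (1) says exactly that $\Sigma(0,0,0,0)=0000$ and that the other $35$ tuples of $S_2$ are sent into $P_3=\{0101,1111,1220,2222,2012\}$.

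First I would fix $(i_1,\dots,i_4)$ by examining the neighbourhood of $(0,0,0,0)$, which is where condition (2) bites hardest. Changing input coordinate $i$ of $(0,0,0,0)$ to $1$ gives a tuple of $S_2$ (since $(1,0),(0,1)\in C_6$). Its image must lie in $P_3$, yet by condition (2) it must agree with $\Sigma(0,0,0,0)=0000$ in every output position $j$ with $i_j=i$. Hence for each $i$ the set $\{j : i_j = i\}$ must be contained in the zero set of some tuple of $P_3$. The zero sets available are $\{1,3\}$ (from $0101$), $\{2\}$ (from $2012$) and $\{4\}$ (from $1220$). This suggests a choice such as $i_1=i_3=a$, $i_2=b$, $i_4=c$ with distinct $a,b,c$. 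The four neighbours of the origin are then forced: the one obtained by changing coordinate $a$ maps to $0101$, the one for $b$ maps to $2012$, the one for $c$ maps to $1220$, and the remaining neighbour is free. I would try the few possible assignments of $a,b,c$, respecting the symmetry of $C_6\times C_6$ (swapping the two factors and swapping the coordinates within a factor).

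With $\mathcal I$ fixed, I would complete $\Sigma$ on all $36$ tuples by encoding the constraints in Z3, exactly as in Appendix~\ref{app:sat}. The constraints are: the variables $\Sigma(x)\in Q_3$ for $x \in S_2$; the equalities $\Sigma(x)_j=\Sigma(y)_j$ whenever $x,y\in S_2$ differ only in coordinate $i_j$; and membership of $\Sigma(x)$ in $P_3$ for every $x \neq 0000$. The resulting table would then be displayed and checked by inspection, in the same format as Lemmas~\ref{lem:J1} and \ref{lem:J2}.

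The main obstacle is the propagation of the equality constraints of condition (2). The graph on $S_2$ connecting tuples that differ in one coordinate is connected, so the forced values near the origin propagate outward. This can create a conflict, because $P_3$ has only five tuples with few shared coordinate values. If the first choice of $(i_j)$ is unsatisfiable, I would simply enumerate all $4^4$ choices of $\mathcal I$ (equivalently, let the solver choose $\mathcal I$ as well). I expect that one of them succeeds, given that analogous maps exist for the more constrained targets $\pi_{J_1}\BoolBCK \mid \pi_{J_1}\BoolBCKp$ and $\pi_{J_2}\BoolBCK \mid \pi_{J_2}\BoolBCKp$.
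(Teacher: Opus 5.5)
\textbf{There is a genuine gap: the proposal contains no witness.} Your setup is correct. Via Proposition~\ref{prop:projection-compact}, you need $\Sigma:S_2\to Q_3$ with $\Sigma(0000)=0000$ and $\Sigma(x)\in P_3$ for the other $35$ tuples. Your local analysis is also right: each fibre $\{j: i_j=i\}$ must be contained in $\{1,3\}$, $\{2\}$ or $\{4\}$.

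But the lemma is a finite existence statement, and its proof \emph{is} a certificate: a specific $\mathcal I$ together with a $36$-entry table for $\Sigma$ that can be checked. You give neither, only the expectation that a solver search will succeed. The evidence you cite for that expectation also points the wrong way. In the map $\Cat_5^{+}\to\BoolBCKp$ of Appendix~\ref{app:non-Boolean}, every output coordinate depends on a single input coordinate, and only output coordinate $2$ depends on input coordinate $5$. So $P_3\mid Q_3$ is itself a singleton-index substructure of $\pi_{J_2}\BoolBCK\mid\pi_{J_2}\BoolBCKp$. Composing, a map for the present lemma would yield one for Lemma~\ref{lem:J2}, not conversely. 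The $\BoolBCK$ projections are therefore the \emph{easier} targets, and their maps are no evidence here.

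\textbf{Your preferred first branch cannot succeed.} Suppose $i_1=i_3=a$, $i_2=b$, $i_4=c$. Call two tuples of $S_2$ that differ only in coordinate $k$ a $k$-edge.
\begin{itemize}
\item $(\Sigma_1,\Sigma_3)$ is injective on $P_3$, so $\Sigma$ is constant along $a$-edges that avoid the origin.
\item On $P_3$, the set $\{0101,1111\}$ is a full class both for the invariance of output $2$ and for that of output $4$.
\item If $a$ lies in the same $C_6$-factor as $b$ (resp.\ $c$), fix the other factor at $00$. Walk around that factor's $6$-cycle from the $a$-neighbour of $0000$ (forced to $0101$) to its $b$-neighbour (forced to $2012$; resp.\ the $c$-neighbour, forced to $1220$). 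The walk avoids the origin and alternates $a$-edges with $b$-edges (resp.\ $c$-edges). It therefore never leaves $\{0101,1111\}$, a contradiction.
\item If instead $a$ shares its factor with the unused coordinate, set coordinates $a$ and $b$ of $0000$ to $1$. This gives a tuple $w\in S_2$. It is $a$-adjacent to the $b$-neighbour, so $\Sigma(w)=2012$. It is also $b$-adjacent to the $a$-neighbour, so $\Sigma(w)_2=1$, a contradiction.
\end{itemize}
Hence $j\mapsto i_j$ must be a bijection.

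The paper takes the identity, $I_j=[4]\setminus\{j\}$, i.e.\ $\mathcal I=(\{2,3,4\},\{1,3,4\},\{1,2,4\},\{1,2,3\})$. It exhibits an explicit Z3-found $\Sigma$ in which the neighbours of $0000$ along coordinates $1,2,3,4$ go to $0101,2012,0101,1220$. This is consistent with your forced values, and that checkable table is exactly what your proposal is missing.
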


\begin{proof}
By Proposition~\ref{prop:projection-compact}, it suffices to construct a map $\Sigma : S_2 \to Q_3$ such that the following conditions hold:
\begin{itemize}
\item $\Sigma(R_2) \subseteq  P_3$
\item $\Sigma(S_2 \setminus R_2) \subseteq Q_3 \setminus P_3$
\item Every output coordinate of $\Sigma$ depends only on three input coordinates.
\end{itemize}
Using the Z3 SMT solver~\cite{Z3}, we constructed such a $\Sigma$ as follows, which can be checked by inspection.
\begin{align*}
\Sigma(0, 0, 0, 0) &= (0, 0, 0, 0) & 
\Sigma(0, 0, 0, 1) &= (1, 2, 2, 0)\\
\Sigma(0, 0, 1, 0) &= (0, 1, 0, 1) & 
\Sigma(0, 0, 1, 2) &= (1, 1, 1, 1)\\
\Sigma(0, 0, 2, 1) &= (2, 2, 2, 2) & 
\Sigma(0, 0, 2, 2) &= (2, 0, 1, 2)\\
\Sigma(0, 1, 0, 0) &= (2, 0, 1, 2) & 
\Sigma(0, 1, 0, 1) &= (2, 2, 2, 2)\\
\Sigma(0, 1, 1, 0) &= (1, 1, 1, 1) & 
\Sigma(0, 1, 1, 2) &= (1, 1, 1, 1)\\
\Sigma(0, 1, 2, 1) &= (2, 2, 2, 2) & 
\Sigma(0, 1, 2, 2) &= (2, 0, 1, 2)\\
\Sigma(1, 0, 0, 0) &= (0, 1, 0, 1) & 
\Sigma(1, 0, 0, 1) &= (1, 1, 1, 1)\\
\Sigma(1, 0, 1, 0) &= (0, 1, 0, 1) & 
\Sigma(1, 0, 1, 2) &= (1, 1, 1, 1)\\
\Sigma(1, 0, 2, 1) &= (2, 0, 1, 2) & 
\Sigma(1, 0, 2, 2) &= (2, 0, 1, 2)\\
\Sigma(1, 2, 0, 0) &= (1, 1, 1, 1) & 
\Sigma(1, 2, 0, 1) &= (1, 1, 1, 1)\\
\Sigma(1, 2, 1, 0) &= (1, 1, 1, 1) & 
\Sigma(1, 2, 1, 2) &= (1, 1, 1, 1)\\
\Sigma(1, 2, 2, 1) &= (2, 0, 1, 2) & 
\Sigma(1, 2, 2, 2) &= (2, 0, 1, 2)\\
\Sigma(2, 1, 0, 0) &= (2, 2, 2, 2) & 
\Sigma(2, 1, 0, 1) &= (2, 2, 2, 2)\\
\Sigma(2, 1, 1, 0) &= (1, 2, 2, 0) & 
\Sigma(2, 1, 1, 2) &= (1, 2, 2, 0)\\
\Sigma(2, 1, 2, 1) &= (2, 2, 2, 2) & 
\Sigma(2, 1, 2, 2) &= (2, 2, 2, 2)\\
\Sigma(2, 2, 0, 0) &= (1, 2, 2, 0) & 
\Sigma(2, 2, 0, 1) &= (1, 2, 2, 0)\\
\Sigma(2, 2, 1, 0) &= (1, 2, 2, 0) & 
\Sigma(2, 2, 1, 2) &= (1, 2, 2, 0)\\
\Sigma(2, 2, 2, 1) &= (2, 2, 2, 2) & 
\Sigma(2, 2, 2, 2) &= (2, 2, 2, 2)
\end{align*}
Here $(I_1, I_2, I_3, I_4) = (\{2,3,4\},\{1,3,4\},\{1,2,4\},\{1,2,3\})$.
\end{proof}

\end{document}